\newtheorem{theorem}{Theorem}
\theoremstyle{remark}
\begin{document}

\title{A Reliability-Aware, Delay Guaranteed, and Resource Efficient Placement of Service Function Chains in Softwarized 5G Networks}

\author{Prabhu Kaliyammal Thiruvasagam, Vijeth J Kotagi, and C Siva Ram Murthy, \textit{Fellow, IEEE} \\ 
	Indian Institute of Technology Madras, Chennai 600036, India \\
	prabhut@cse.iitm.ac.in, vijethjk@cse.iitm.ac.in, murthy@iitm.ac.in}
 
\maketitle

\begin{abstract}
Network Functions Virtualization (NFV) allows flexibility, scalability, agility, and easy manageability of networks by leveraging the features of virtualization and cloud computing technologies. However, softwarization of network functions imposes many challenges. Reliability and latency are major challenges in NFV-enabled 5G networks that can lead to customer dissatisfaction and revenue loss. In general, redundancy is used to improve the reliability of communication services. However, redundancy requires the same amount of additional resources and thus increases cost. In this work, we address the reliability-aware, delay guaranteed, and resource efficient  Service Function Chain (SFC) placement problem in softwarized 5G networks. First, we propose a novel SFC subchaining method to enhance the reliability of an SFC without backups. If reliability requirement is not met after subchaining method, we add backups to VNFs to meet the reliability requirement. Then, we formulate the reliable SFC placement problem as an Integer Linear Programming (ILP) problem in order to solve it optimally. Owing to high computational complexity of the ILP problem for solving large input instances, we propose a modified stable matching algorithm to provide near-optimal solution in polynomial time. By extensive simulations we show that our proposed solutions consume lesser physical resources compared to state-of-the-art solutions for provisioning reliable communication services.
\end{abstract}

\begin{IEEEkeywords}
5G network,  Communication service, Network functions virtualization, Virtual network function, Service function chaining, Reliability, Resource management, Service level agreement, Queueing theory, Matching theory.
\end{IEEEkeywords}

\section{Introduction}
\IEEEPARstart{T}{he} current traditional networks have certain limitations such as long design and testing time to bring network functions to market, location dependent functionalities (migration of network functions is not an easy task without interrupting services), and not being able to support the required flexibility and scalability to meet the demands of communication services. Network operators need to design and deploy additional network functions to adapt to new technology and accommodate the growth of mobile connected devices, which results in increased capital expenditure (CAPEX) and operational expenditure (OPEX). To reduce the costs, and make networks more flexible and scalable to handle the ever increasing demands and future business opportunities, communication service providers and network operators are moving towards softwarized 5G networks. 

Softwarization in 5G networks to support services such as enhanced mobile broadband and ultra-reliable low-latency communications has revolutionized the networking industry. It is expected that 5G networks will meet the stringent requirements of communication services of various industry verticals and business models of 2020 and beyond \cite{5g_req}. Network Functions Virtualization (NFV) and Software-Defined Networking (SDN) are the two new promising technologies in the field of networking, which are designed to overcome the limitations of traditional networks. NFV and SDN are considered as key technology enablers for softwarization in 5G networks \cite{zarrar}.

NFV allows network functions (or middleboxes) to run as software modules on commercial-off-the-shelf servers rather than on specialized hardware appliances. Such virtualized software modules are called as Virtual Network Functions (VNFs). SDN \cite{Kreutz_2015} enables network programmability by decoupling control plane and data plane of networking devices. Data plane and control plane separation makes networking devices controllable through a centralized entity. The centralized control plane or controller can orchestrate multiple data plane flows dynamically. SDN can be used to route the traffic dynamically through multiple VNFs. NFV leverages virtualization, cloud computing, and SDN technologies to provide anything as a service (e.g., core network as a service, security as a service, etc.) dynamically over the network, and reduces CAPEX and OPEX. NFV provides an effective way to design, deploy, and manage network functions and services over the cloud environment. 

Traditionally, network/communication services are provided through one or more network functions to deliver an end-to-end (e2e) service. SFC involves instantiation of an ordered list of network/service functions (e.g., firewalls, load balancers, and mobile network gateways), and connecting them together as a chain of network functions to provide e2e services \cite{SFC_PS} \cite{Haeffner-SFC}. SFC is used to design a tailor-made specific communication service based on the demand. Service chain deployment includes SFC design and placement of SFCs. NFV facilitates easy provisioning of services by dynamically placing VNFs in the virtual environment and chaining them together as an SFC. Efficient mapping and placement of SFCs onto substrate nodes is a challenging task \cite{SFC_PS} \cite{Medhat_2017}.   

Although NFV and SDN provide many benefits in terms of cost reduction and flexible management of resources to 
dynamically provide diverse services, they create avenues for reliability, availability, and latency related issues. Particularly, softwarization of network and service functions impose higher possibility of network failures due to software issues than due to hardware issues. For instance, failure of a VNF or a virtual link in an SFC will bring down the entire chain and disrupt the service which may result in customer dissatisfaction and revenue loss. Failures may happen both at the substrate network and the virtual network, but the frequency of failures is higher at virtual networks than at substrate networks \cite{Benz}. In NFV infrastructure, a VNF may fail to do its intended function due to various reasons such as software bugs, misconfiguration, API failures, malicious attacks, and network operator errors. Abrupt failure of any function or connected link results in delay, data loss, and resource wastage. Since failure of a single component of an SFC has high impact on the ongoing services and revenue, reliable service provisioning is of paramount importance in NFV-enabled 5G networks. Therefore, merely placing primary VNFs of an SFC is not sufficient to ensure the service continuity in case of failures. Service continuity is not only expected from user side, but it is also considered as a regulatory requirement for critical infrastructures like telecom networks \cite{NFV_REL_001}. Hence, reliability is an important issue when purpose-built dedicated hardware based network function with high reliability is moved to off-the-self general purpose servers.

The reliability of a communication service is estimated based on the reliability of constituent network functional blocks \cite{NFV_REL_003}. Higher reliability of SFC minimizes the impact of service outages due to unexpected VNF failures. Another important aspect of NFV-enabled 5G network is meeting Service Level Agreements (SLAs) in terms of delay, availability, and reliability. A common approach for achieving higher reliability and meeting delay constraints is placing redundant network elements (also called as backups) \cite{Fan_2015} \cite{Qu1_2018} \cite{Sun}. However, such an approach is expensive and ineffective in terms of utilization of available resources. In this paper, we propose novel methods to address the reliability issues and efficiently place SFCs onto the substrate network.
 
The significant contributions of this paper are listed below. 
\begin{itemize}
	\item We present two ways of assigning backups to an SFC and propose a novel subchaing method to enhance the reliability of the SFC without redundancy. 
	
	\item We propose an algorithm to calculate the reliability of multiple subchains of an SFC. In the case that the reliability enhancement obtained by SFC subchaining method is not sufficient, we propose an algorithm to guarantee the reliability requirement of diverse service requests with minimal redundant resources. 
	
	\item We formulate the reliable SFC placement problem as an Integer Linear Programming (ILP) problem, and prove that the SFC placement 
	problem is NP-hard. 
	
	\item We use JuMP and Gurobi optimization solver for modeling and solving the ILP problem, respectively. To overcome high computational complexity of ILP for large input instances, we propose a modified matching algorithm to obtain near-optimal solution in polynomial time. 
	
	\item Finally, we evaluate the performance of our proposed solutions via simulation, and show that our proposed solutions outperform the state-of-the-art related works. 	
	
\end{itemize}
The rest of this paper is organized as follows. Section II presents background and related work. Section III presents system model and problem definition. Section IV describes a novel SFC subchaining method to enhance the reliability, and algorithms to guarantee the reliability requirement of SFCs. Section V presents the ILP formulation and solution for efficient reliable SFC design placement. We evaluate and analyze the performance of the proposed solutions in section VI. Finally, we conclude the paper with future work in section VII.

\section{Background and Related Work}
NFV concept was introduced in 2012 by a group of leading telecom operators, who later created the NFV Industrial Specification Group (NFV ISG) in European Telecommunications Standards Institute (ETSI) to call for collaboration and standardization activities \cite{NFV_WP1_2012}. ETSI NFV ISG has been creating a number of technical requirement documents including NFV reference architectural frameworks \cite{NFV_002_2014} \cite{NFV_MANO}. NFV architectural framework \cite{NFV_002_2014} consists of NFV Infrastructure (NFVI) layer, VNF layer, service layer, and management and orchestration (MANO) layer. Both physical and virtual resources reside at the NFVI layer, which form a cloud network infrastructure to provide services through VNF layer and SFC layer. MANO layer \cite{NFV_MANO} is responsible for dynamic resource management and life-cycle management of VNFs and services.

Since NFV is considered as one of the key technology enablers for 5G softwarized networks \cite{zarrar}, it has gained significant attention of the researchers from both industry and academia. VNF/SFC placement strategy is not standardized by standard development organizations and it is up to the choice of network operators and service providers to use their own strategies to efficiently allocate the resources for service provisioning. In the context of VNF/SFC placement in NFV-enabled networks, \cite{Bari_2016} \cite{Taleb2_2019} and their corresponding references mainly focused on reducing operational cost and network resource allocation with the goal of maximizing the revenue of network operators and service providers. However, simply placing VNFs in order to reduce the resource consumption or cost may not meet the SLA requirements of service requests. 
\begin{table*}[]
	\centering
	\caption{Comparison of existing related studies}
	\label{tab:relatedWork_comparison}
	\begin{tabular}{|c|c|c|c|c|c|c|}
		\hline
		References & \begin{tabular}[c]{@{}c@{}}Latency-aware\\ SFC placement\end{tabular} & \begin{tabular}[c]{@{}c@{}}Reliability-aware\\ SFC placement\end{tabular} & \begin{tabular}[c]{@{}c@{}}Full backup\\ is required\end{tabular} & \begin{tabular}[c]{@{}c@{}}Partial backup\\ is enough\end{tabular} & \begin{tabular}[c]{@{}c@{}}Physical node\\ reliability\end{tabular} & \begin{tabular}[c]{@{}c@{}}VNF\\ reliability\end{tabular} \\ \hline
		\cite{Alleg, Alameddine1, Alameddine2, Garg_2019, Bi_2019, Harut_2019} & $\checkmark$ & $\times$ & $\times$ & $\times$ & $\times$ & $\times$ \\ \hline 			
		\cite{Fan_2015} & $\times$ & $\checkmark$ & $\checkmark$ & $\times$ & $\checkmark$ & $\times$ \\ \hline 
		\cite{Qu1_2018} & $\times$ & $\checkmark$ & $\checkmark$ & $\times$ & $\times$ & $\checkmark$ \\ \hline
		\cite{Herker_2015} & $\times$ & $\checkmark$ & $\checkmark$ & $\times$ & $\checkmark$ & $\times$ \\ \hline  
		\cite{Ali_2016} & $\checkmark$ & $\checkmark$ & $\checkmark$ & $\times$ & $\checkmark$ & $\times$ \\ \hline 
		\cite{Ye_2016} & $\times$ & $\checkmark$ & $\checkmark$ & $\times$ & $\checkmark$ & $\times$ \\ \hline
		\cite{Taleb1_2016} & $\times$ & $\checkmark$ & $\times$ & $\times$ & $\checkmark$ & $\times$ \\ \hline 
		\cite{Kanizo_2017} & $\times$ & $\checkmark$ & $\checkmark$ & $\times$ &  $\times$ & $\checkmark$ \\ \hline
		\cite{Chantre1_2018} & $\times$ & $\checkmark$ & $\checkmark$ & $\times$ & $\checkmark$ & $\times$ \\ \hline 		 
		\cite{Ding_2017} & $\times$ & $\checkmark$ & $\checkmark$ & $\times$ & $\checkmark$ & $\checkmark$ \\ \hline 
		\cite{Engel_2018} & $\times$ & $\checkmark$ & $\checkmark$ & $\times$ & $\checkmark$ & $\times$ \\ \hline 
		Our work & $\checkmark$ & $\checkmark$ & $\times$ & $\checkmark$ & $\checkmark$ & $\checkmark$ \\ \hline  
		\end{tabular}	
\end{table*}

A set of works in the literature dealt with delay/latency aspects of VNF placement in addition to minimizing provisioning cost or resource consumption. A~Alleg \textit{et al} \cite{Alleg} modeled the delay-aware VNF placement and chaining problem as mixed integer quadratically constrained program to provide a solution with the goal of minimizing the resource consumption while meeting the SLA latency requirement. H~A~Alameddine \textit{et al} \cite{Alameddine1} \cite{Alameddine2} formulated SFC mapping, routing, and scheduling problem as mixed ILP problem to meet the deadline of service requests, and proposed heuristic algorithm for scalable networks. G~Garg \textit{et al} \cite{Garg_2019} proposed delay-aware VNF selection algorithm by considering the relationship between VNF delay and CPU utilization to increase the acceptance rate and throughput of service requests. Y~Bi \textit{et al} \cite{Bi_2019} proposed resource allocation method for ultra-low latency virtual network services in hierarchical 5G network. The problem was modeled as mixed ILP to obtain optimal solution, and data rate-based heuristic algorithm was proposed for scalable networks. D~Harutyunyan \textit{et al} \cite{Harut_2019} studied latency-aware SFC placement in 5G mobile networks by formulating the problem as ILP to minimize the e2e latency and service cost and proposed a heuristic algorithm to address the scalability issue. However, these research works mainly focused on latency-aware service provisioning and assumed that the network functions were active all the time without any failure. It may not be the case in real-time service provisioning scenario because there is a possibility for service interruptions due to service degradation and failure of network functions and resources. 

A few works considered reliability aspects of VNFs/services along with the objective of minimizing the resource consumption or operational cost. J~Fan \textit{et al} \cite{Fan_2015} proposed joint protection online algorithm which allocates a joint backup for two VNFs on a single server to enhance the reliability and reduce the physical resource consumption. However, in this work only physical node reliability was considered and two backup VNFs were assigned at each iteration. S~Herker \textit{et al} \cite{Herker_2015} proposed heuristic algorithm with two backup deployment strategies (with and without load balancer) for reliable VNF chains in the data center networks. However, the proposed strategies consumed more physical resources to meet the reliability requirements.  A~Hmaity \textit{et al} \cite{Ali_2016} proposed a method for VNF placement and resilient service chain provisioning, and formulated the problem as ILP problem to minimize the number of active physical nodes used while satisfying the latency constraints. However, the proposed method can solve only a small number of input instances due to the high computational complexity and consumed more than twice the amount of network resources to provide resiliency. Z~Ye \textit{et al} \cite{Ye_2016} proposed joint optimization of topology design and mapping of SFCs to minimize the total bandwidth consumption and compared the amount of consumption of resources in order to enhance the reliability of SFC requests. However, they did not provide any methods to guarantee the reliability and latency requirements of service requests. T~Taleb \textit{et al} \cite{Taleb1_2016} proposed restoration mechanism for VNF failures, particularly considered Mobility Management Entity (MME) control plane VNF failure restoration process by proposing bulk signalling and profile creation to reduce the load. However, the failure restoration mechanism focused on specific VNF type and hence it may not meet the reliability requirements of mission critical service requests in general. Y~Kanizo \textit{et al} \cite{Kanizo_2017} proposed a novel approach for planning and deploying backups optimally to guarantee the survivability of service chain. However, full backup was required to guarantee the survivability, and latency aspect was not considered. H~Chantre and N~Fonseca \cite{Chantre1_2018} proposed two redundancy based models for reliable broadcasting in 5G NFV-based networks, which determine the number of redundant VNFs required to meet the service reliability requirement. However, these methods required full backups, and latency aspect was not considered. W~Ding \cite{Ding_2017} \textit{et al} proposed cost-efficient redundancy scheme for enhancing the reliability of services in NFV-enabled networks. However, full backup was required for enhancing the reliability of services, and latency aspect was not considered. 

In Table \ref{tab:relatedWork_comparison}, we compare our approach with different reliability-aware and latency-aware VNF/SFC placement techniques proposed in the literature. The closest research works to this current work are \cite{Qu1_2018} \cite{Engel_2018} \cite{Pham_2018}. L~Qu \textit{et al} \cite{Qu1_2018} proposed reliability-aware approach for service chaining in carrier-grade softwarized networks. However, this approach considered only physical node reliability and required full backups to guarantee the reliability requirement, and did not take into account the latency aspect. A~Engelmann and A~Jukan \cite{Engel_2018} proposed parallelized VNF chaining to enhance the reliability of SFC, in which a large flow was split into multiple smaller sub-flows and each sub-flow was processed by replicated VNF in parallel. Though this approach aimed to process the sub-flows in parallel, VNFs were replicated to process subflows and dedicated full backup was assigned to enhance the reliability of network functions. Also, this work did not take into account the latency and resource minimization aspects. C~Pham \textit{et al} \cite{Pham_2018} proposed VNF placement for service chaining using sampling and matching approach. However, their work primarily focused on minimizing cost by considering energy consumption and traffic flow, and did not consider reliability and latency aspects.     
 
In our previous work \cite{PKT_2019}, we showed that dividing an SFC (single chain) into multiple subchains of lesser capacity enhances the reliability of the SFC. Also, we assumed that the underlying physical infrastructure is completely reliable. In the current work, we relax this assumption and consider that underlying substrate nodes are also subject to failures.  

Almost all the proposed methods in the literature used redundancy techniques directly to improve the reliability of communication services. However, such techniques are expensive and ineffective in terms of effective utilization of available resources. In this work, we propose a novel method to address the reliability requirement with minimal redundant resources and efficiently place SFCs on the substrate network. 

\section{Network Model and Problem Definition}
In this section, we present a network model of 5G architectural framework, which consists of underlying substrate network, virtual network, and service functions. 

\subsection{Substrate Network}
The physical/substrate network is modeled as an undirected graph denoted by $G_p = (\mathcal{N},\mathcal{L})$, where $\mathcal{N}$ represents a set of physical/substrate nodes which include both the processing and forwarding nodes and $\mathcal{L}$ represents the set of physical links. VNFs are placed on the processing nodes to process incoming service traffic and the forwarding nodes are used to interconnect a set of processing nodes and forward service traffic in the network. Each processing node $n \in \mathcal{N}$ has a finite resource capacity $c_n \in \mathbb{R}^+$ (CPUs, Memory, Disk Space, etc.)  and multiple VNFs can be hosted on a single processing node. Similarly, each substrate link has a finite bandwidth capacity and interconnects the physical nodes. Physical resources are virtualized to create virtual networks and controlled with the help of MANO and SDN controller.     

\subsection{Virtual Network}
We represent the virtual network as an undirected graph denoted by $G_v = (\mathcal{V},\mathcal{E})$, where $\mathcal{V}$ represents a set of VNFs (e.g., load balancer, firewall, intrusion detection system, proxy, mobility management entity, serving/packet gateway, home subscriber server, etc.) and $\mathcal{E}$ represents a set of virtual links which interconnect VNFs in the virtual network. Each VNF $v \in \mathcal{V}$ has a resource demand of $c_v$. Here, the resource demand for VNF $c_v$ represents the number of vCPUs required to process the incoming service traffic. Different types of VNFs can be customized and chained in various ways depending on the service type and requirements to obtain a set of templates. Each virtual link has certain bandwidth requirement (with respect to communication requirements  between the nodes) and interconnects two VNFs in a chain. VNFs are hosted on the physical servers and virtual links are created to interconnect the VNFs and to carry the network traffic over the physical links. The physical and virtual network resources together form cloud network infrastructure. 

\subsection{Service Function Chaining}
SFCs consist of logically interconnected multiple independent network/service functions. It is assumed that Communication Service Providers (CSPs) offer finite number of services using SFCs \cite{5g_req}. Let the set of all SFCs provided by a CSP be denoted by $\mathcal{S}$. Each SFC $s \in \mathcal{S}$ provides a particular service and is represented as a directed sequence of linear chain of functions, a special form of an acyclic directed graph, $G_s = (\mathcal{V}_s, \mathcal{E}_s)$, where $\mathcal{V}_s$ and $\mathcal{E}_s$ represent the set of VNFs in sequential order (i.e., it is the topological ordering such that the incoming traffic is processed by VNF i before it is being processed by VNF i+1 in the SFC chain) and the set of links that interconnect these VNFs, respectively. For example, consider a Voice over IP service request $s$, where the set of VNFs $\mathcal{V}_s$ required to cater to the service $s$ in an order are network address translation, firewall, and traffic monitor. Each VNF $v \in \mathcal{V}$ can be associated with only one SFC $s \in \mathcal{S}$ in order to avoid multiple SFC failures (due to VNF sharing). VNFs of an SFC can be placed on the same node in order to minimize the resources of active physical nodes used, bandwidth consumption, and inter VNF communication delay. Tens to hundreds of SFCs can be created to provide diverse services. As each SFC provides a particular service, we use terms SFC and service request interchangeably. Service orchestrator takes care of life-cycle-management of SFCs. 

\subsection{Service Requests and SLAs}
Different industry verticals have different service requirements based on the application/service type. SLAs between CSPs and customers define the specific requirements and contracts in terms of expected quality of service. CSPs should satisfy SLAs of service requests in order to gain high revenue. For instance, a service request $s \in \mathcal{S}$ can have specific requirements in terms of high data rate, bandwidth demand, maximum allowed delay, reliability, and availability. There can be a penalty policy for violation of SLAs. The penalty amount is paid to the users if a service requirement is not satisfied. 

\subsection{Problem Definition}
As an SFC placement, merely mapping primary VNFs of the SFC onto the substrate network is not sufficient for provisioning reliable communication services. In this work, we first redesign SFC requests to meet the reliability requirement and then efficiently place them onto the substrate network. We define the reliable SFC placement problem as two sub-problems. \\ 
\textit{\textbf{Sub-problem 1: Reliable SFC design.} Given a set of SFC requests, each with a specific delay and reliability requirements, design reliability-aware SFC service graphs such that it minimizes the redundant resources needed to guarantee the reliability requirements while satisfying the delay requirements.} \\ 	   
\textit{\textbf{Sub-problem 2: Placement of reliable SFC graphs.} Given a physical network graph and a set of reliability-aware SFC service graphs, find an efficient way of placing reliability-aware SFC service graphs onto the substrate network such that it minimizes the number of physical nodes required to provide reliable communication services.}   

\section{Reliable SFC Design}
\label{SFC_design}

\subsection{Enhancing Reliability of SFCs with Backups}
Usually, redundant backup VNFs are placed to enhance the reliability of the service chain. Backups are placed to ensure the service continuity in the case of VNF failures. We present two backup methods to enhance the reliability of an SFC. In this work, it is assumed that the incoming traffic of service request $s$ follows Poisson distribution with arrival rate $\lambda_s$ and the serving time of each VNF in an SFC follows an exponential distribution with the serving rate $\mu_s$ \cite{Gouareb}. Let $\psi_v$ be the mean response time of VNF $v \in \mathcal{V}$. Table \ref{tab:1} gives the list of notations and symbols used in this work.
\begin{table}[]
	\centering 
	\footnotesize
	\caption{List of Notations}
	\label{tab:1}
	\begin{tabular}{|l|l|}
		\hline
		$G_p = (\mathcal{N},\mathcal{L})$	&  Physical network $G_p$ with $\mathcal{N}$ nodes and $\mathcal{L}$ links\\ \hline
		$c_n \in \mathbb{R}^+$	&  Resource capacity of physical node $n \in \mathcal{N}$\\ \hline
		$G_v = (\mathcal{V},\mathcal{E})$	&  Virtual network $G_v$ with $\mathcal{V}$ VNFs and $\mathcal{E}$ virtual links\\ \hline
		$c_v$ & Resource requirement of vCPUs for a VNF $v \in \mathcal{V}$ \\ \hline
		$\mathcal{S}$ & Set of SFCs  \\ \hline
		$G_s = (\mathcal{V}_s,\mathcal{E}_s)$ & SFC $s \in \mathcal{S}$ with an ordered VNFs $\mathcal{V}_s$ and links $\mathcal{E}_s$ \\ \hline 	
		$c_s$ & Resource requirement of an SFC $s \in \mathcal{S}$ \\ \hline 	 
		$\mathcal{N}_s$ & SFC $s$ is placed in a physical node $n \in \mathcal{N}_s$  \\ \hline
		$p_n$ & Node $n$ is reliable with probability $p_n$ \\ \hline
		$p_v$ & VNF $v$ is reliable with probability $p_v$ \\ \hline 
		$r_s$ & Reliability of an SFC $s$ \\ \hline
		$\Delta_s$ & Reliability requirement of an SFC $s$ \\ \hline 
		$\Psi_s$ & Latency requirement of an SFC $s$ \\ \hline
		$\psi_v$ & Mean response time of a VNF $v$ \\ \hline 
		$D_s$ & Mean response time of an SFC $s$ \\ \hline 
		$b_v$ & Number of dedicated backups of the VNF $v \in V$ \\ \hline
		$b_c$ & Number of dedicated SFC backup chains \\ \hline  
		$l_c$ & Number of subchains \\ \hline
		$\lambda_s$ & Arrival rate of an SFC $s$ \\ \hline
		$\mu_v$ & Processing rate of a VNF $v$ \\ \hline 
		$pl(s)$ & Preference list of an SFC $s$ \\ \hline 
		$pl(n)$ & Preference list of a node $n$ \\ \hline 	
	\end{tabular}
\end{table}

In NFV environment, availability of a component (VNF or physical node) is defined as the ratio of the mean time the
component is up for delivering services to the sum of the mean time the component is up for delivering services and the mean time the component is down for repairing. Formally, it is defined as follows \cite{NFV_REL_003} \cite{Qu1_2018}: 
\begin{equation}
	\text{Availability} = \frac{MTBF}{MTBF+MTTR}~,
\end{equation} 
where MTBF stands for mean time between failures of the component and MTTR stands for mean time to repair the failed component. Reliability of a component (VNF or physical node) is defined as the probability that the component is available for providing services without failure for a stated period of time. 

Consider an SFC $s \in \mathcal{S}$, which is placed onto the substrate nodes as shown in Figure \ref{Fig:scb}. Let $\mathcal{N}_s$ denotes the set of all such substrate nodes in which VNFs of an SFC $s$ is placed. If each VNF $v \in \mathcal{V}_s$ in the SFC is reliable with a probability $p_v$ and each substrate node $n \in \mathcal{N}_s$ where VNFs are placed is reliable with the probability $p_n$, then the overall reliability of the SFC chain $r_s$ is calculated as,
\begin{align}
r_s = \prod_{v \in \mathcal{V}_s} p_v \times \prod_{n \in \mathcal{N}_s} p_n 
\label{eq:relSFC} 
\end{align} 

We use $c_v$ to denote the CPU resource requirement for each VNF $v \in \mathcal{V}_s$. The resource requirement $c_s$ of the entire service chain $s \in \mathcal{S}$ can be calculated as,
\begin{align}
c_s = \sum\limits_{v \in \mathcal{V}_s} c_v \label{eq:resReqSFC}
\end{align}

We consider that any service request $s \in \mathcal{S}$ has the reliability and the latency requirements denoted by $\Delta_s$ and $\Psi_s$, respectively. Hence, the service chain $s \in \mathcal{S}$ should satisfy the following conditions,
\begin{equation}
\sum\limits_{v \in \mathcal{V}_s} \psi_v \le \Psi_s \\
\end{equation}
\begin{equation}
r_s \geq \Delta_s
\end{equation}

To analyze the delay, we model every VNF in a chain as an M/M/1 queue and the entire SFC as tandem of M/M/1 network of queues. By Burke's theorem, the arrival rate $\lambda_s$ is same for all the VNFs in tandem of M/M/1 network of queues. The average response time of the SFC can be calculated as,
\begin{equation}
D_s = \sum \limits_{v \in \mathcal{V}_s} \psi_v = \sum\limits_{v \in \mathcal{V}_s}  \frac{1}{\mu_v - \lambda_s} 
\label{eq:delaySFC}
\end{equation}

\begin{figure}[t]
	\centering
	\begin{subfigure}{0.45\textwidth} 
		\vspace{-1.2cm}
		\includegraphics [width=\textwidth]{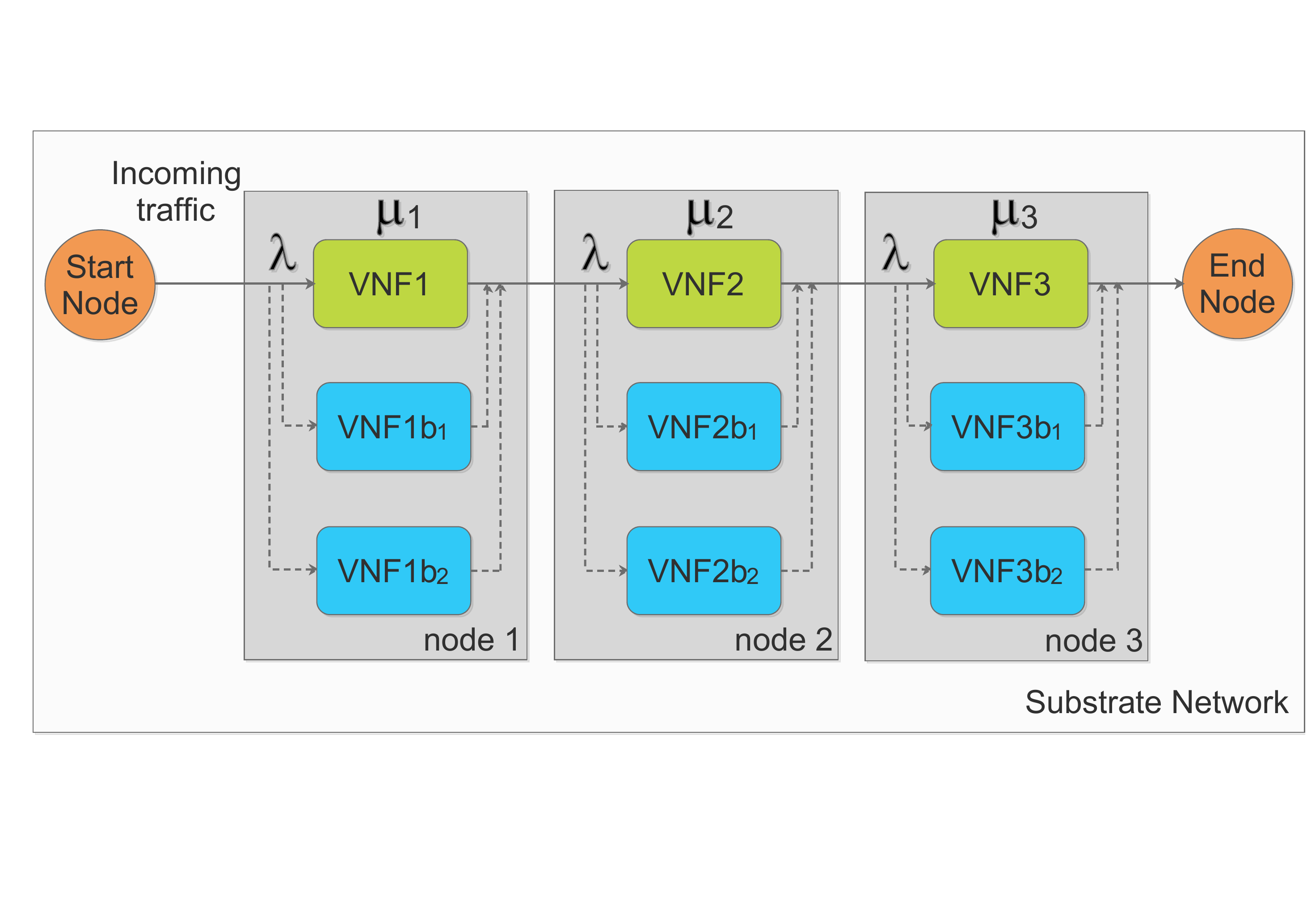}
		\vspace{-1.5cm}
		\caption{An SFC with dedicated backup VNFs.} 				
		\label{Fig:scb1}
		\vspace{-.7cm}
	\end{subfigure}
	\begin{subfigure}{0.45\textwidth}
		\includegraphics [width=\textwidth]{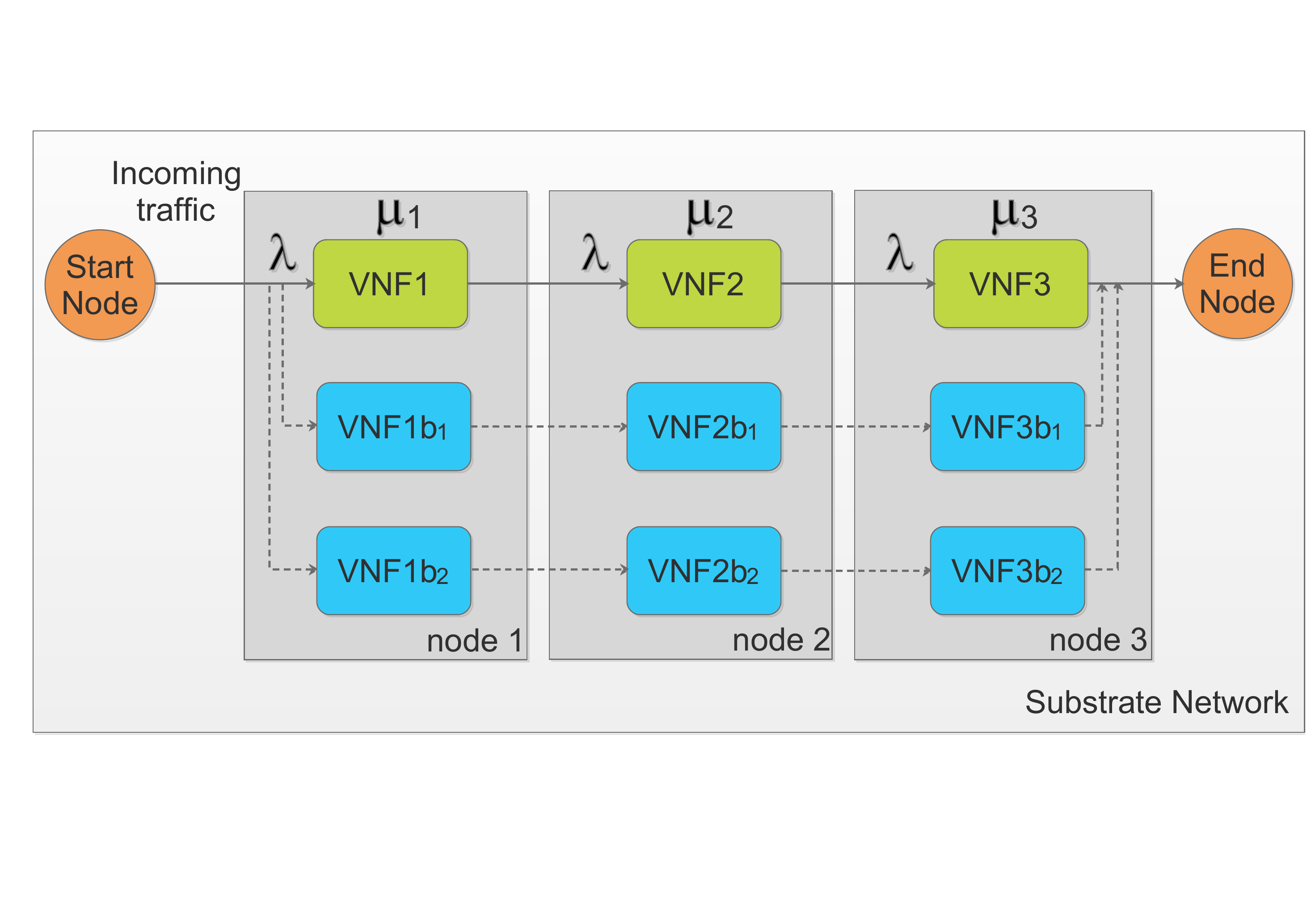}
		\vspace{-1.6cm}
		\caption{An SFC with separate chains as backup.} 				
		\label{Fig:scb2}
	\end{subfigure}
	\caption{An SFC with different backup settings.}
	\label{Fig:scb}
	
\end{figure}

Backups for an SFC can be added in two ways. One way is to assign a dedicated backup for each VNF of an SFC as shown in Figure \ref{Fig:scb1}. If any primary VNF component of the SFC fails, then the corresponding backup VNF is activated. The reliability of the SFC with dedicated backup VNFs can be calculated as,
\begin{equation}
r_{s_{b1}} = \prod_{v \in \mathcal{V}_s} \Big(1 - (1 - p_v)^{b_v + 1}\Big) \times \prod_{n \in \mathcal{N}_s} p_n
\label{eq:b1}  
\end{equation}
where $b_v$ is the number of dedicated backups of the same VNF type $v \in \mathcal{V}_s$ ($b_v \geq$ 1), and $n \in \mathcal{N}_s$ is the set of substrate nodes in which the VNFs (primary and its corresponding backup) are placed. We consider that a primary VNF and its corresponding backup VNFs are placed in the same substrate node to minimize the resource usage. VNFs can be placed in different substrate nodes with additional resource consumption. 

The resource requirement for the service chain with dedicated backups can be calculated as,
\begin{align}
c_{s_{b1}} = \sum\limits_{v \in \mathcal{V}_s} (b_v+1) \times c_v
\end{align}

The other way of adding backups for service continuity is to assign an entire SFC as backup as shown in Figure \ref{Fig:scb2}. If a VNF component of a primary SFC fails, then backup SFC will be activated. The reliability of SFC with separate SFC chains as backup can be calculated as,
\begin{equation}
r_{s_{b2}} = \Big(1 - (1 -  \prod\limits_{v \in \mathcal{V}_s} p_v)^{b_c+1}\Big) \times \prod \limits_{n \in \mathcal{N}_s} p_n  
\label{eq:b2}
\end{equation}
where $b_c$ is the number of separate SFC backup chains. 

The resource requirement for the service chain with separate SFC chains as backups can be calculated as,
\begin{align}
c_{s_{b2}} = (b_c+1) \times \sum\limits_{v \in \mathcal{V}_s} c_v
\end{align}

From Equations \eqref{eq:relSFC}, \eqref{eq:b1}, and \eqref{eq:b2}, it can be inferred that the service chain with backups has higher reliability than the one which does not have any backups. However, service chain with backups consumes more amount of resource in order to enhance the reliability. Hence, this approach is inefficient with respect to utilization of resources. The redundant backup resources are idle until a failure happens in the primary VNFs. Also, since failure may happen randomly at any point of time, assigned redundant backup resources cannot be used for any other purpose. 

\subsection{Enhancing Reliability of SFCs without Backups}
To efficiently utilize the available resources and enhance the reliability of service chains without assigning backups, we propose to divide the VNFs of an SFC into multiple lesser capacity VNFs and place them onto the substrate nodes. VNFs of an SFC can be divided into lower capacity VNFs and chained in parallel (rather than assigning dedicated backups in parallel) as shown in Figure \ref{Fig:subchaining}. We call this method as subchaining.  A reduced capacity (processing rate) VNF performs the same software functionality as that of the original VNF, and the reliability of each VNF is still $p_v$.

\begin{figure}[t]
	\centering
	\begin{subfigure}{0.45\textwidth} 
		\vspace{-1.9cm}
		\includegraphics [width=\textwidth]{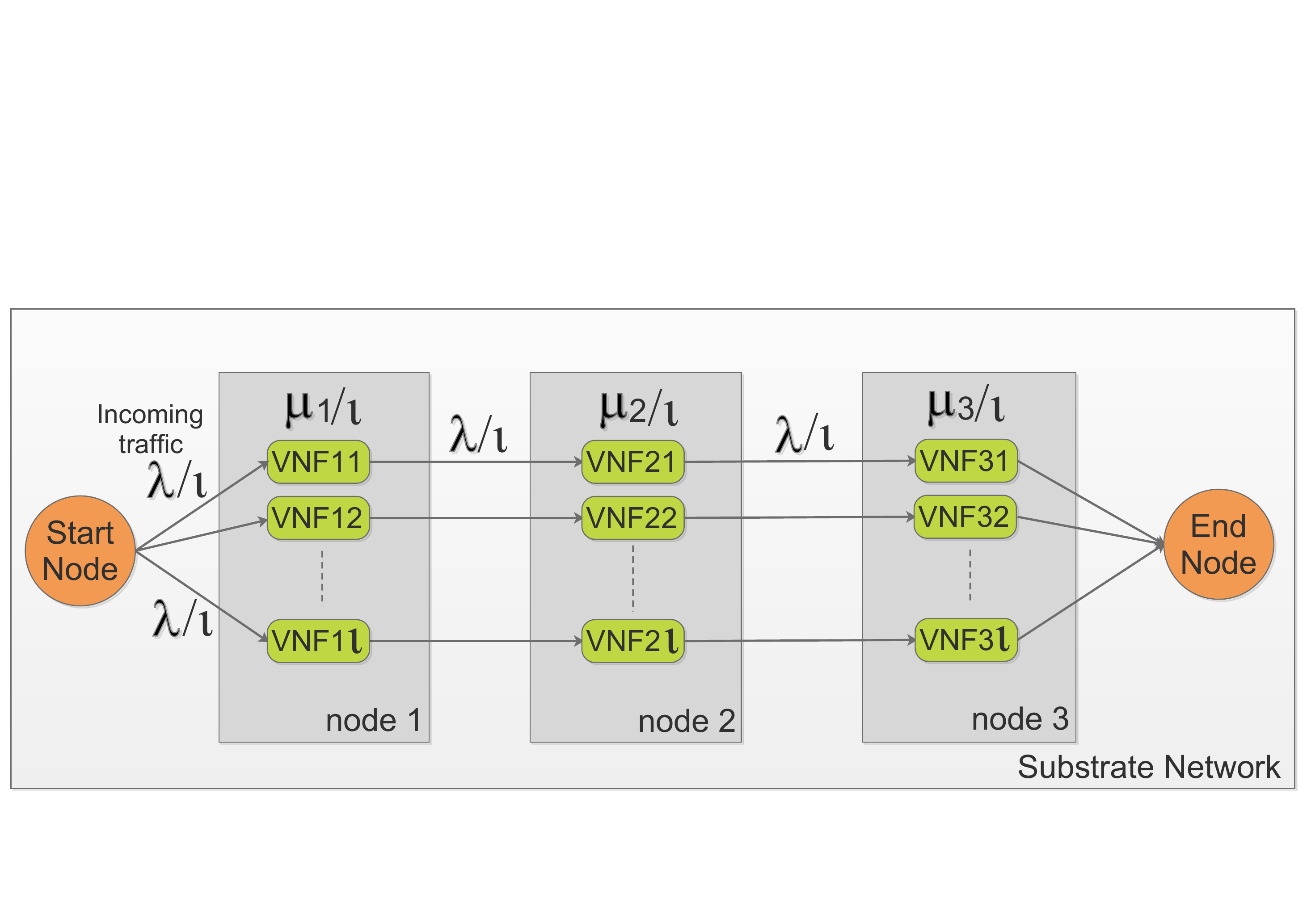}
		\vspace{-1.3cm}
		\caption{Subchaining an SFC as tandem of M/M/1 network of queues.} 				
		\label{Fig:subc1}
	\end{subfigure}
	\begin{subfigure}{0.45\textwidth}
		\vspace{-1.3cm}
		\includegraphics [width=\textwidth]{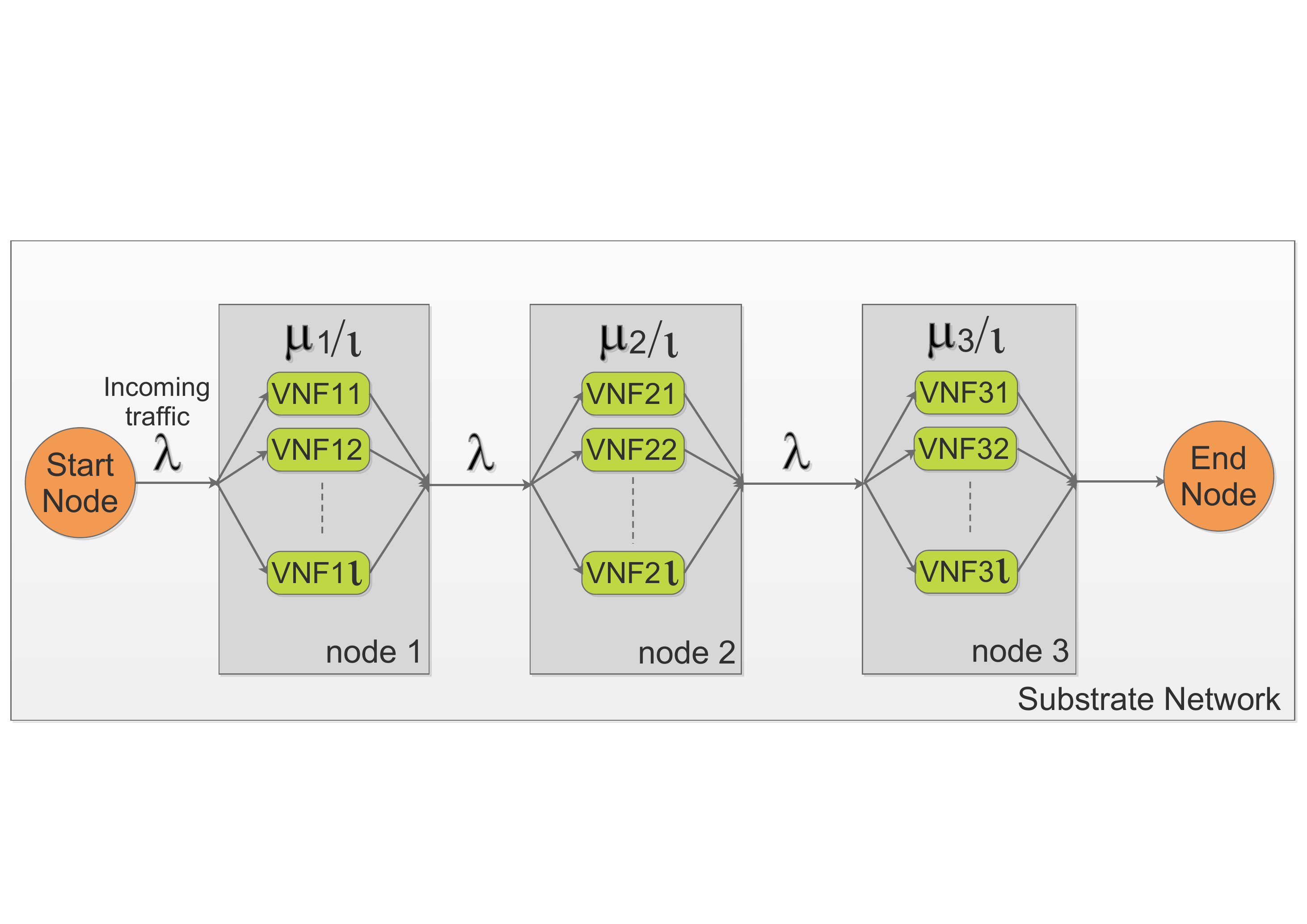}
		\vspace{-1.7cm}
		\caption{Subchaining an SFC as tandem of M/M/m network of queues.} 				
		\label{Fig:subc2}
	\end{subfigure}
	\vspace{-.2cm}
	\caption{An SFC subchaining methods.}
	\label{Fig:subchaining}
	
\end{figure}

If we divide an SFC into $l_c$ number of subchains with each VNF $v \in \mathcal{V}_s$ having processing capacity of $\frac{\mu_v}{l_c}$, then the incoming traffic is equally divided to all the subchains ($\frac{\lambda_s}{l_c}$). Each subchain of the SFC can be modeled as tandem of M/M/1 network of queues as shown in Figure \ref{Fig:subchaining}a. The reliability and average response time of $l_c$ subchains of the SFC in the tandem of M/M/1 queueing network setting can be calculated as,
\begin{align}
r_{s_{M/M/1}} = \Big(1 - (1 -  \prod\limits_{v \in \mathcal{V}_s} p_v)^{l_c} \Big) \times \prod \limits_{n \in \mathcal{N}_s} p_n
\label{eq:relMM1}
\end{align}
\begin{equation}
D_{s_{M/M/1}} = \sum\limits_{v \in \mathcal{V}_{\bar{s}}} \frac{1}{\frac{\mu_v}{l_c} - \frac{\lambda_s}{l_c}} = \sum\limits_{v \in \mathcal{V}_{\bar{s}}} \frac{l_c}{\mu_v - \lambda_s} \label{eq:delayMM1}
\end{equation}  

If we divide each VNF of an SFC into $l_c$ number of lesser capacity VNFs with processing capacity of $\frac{\mu_v}{l_c}$, then the incoming traffic $\lambda_s$ can be processed by any of the $l_c$ number of VNFs. It can be modelled as tandem of M/M/m network of queues as shown in Figure \ref{Fig:subchaining}b. The reliability and average response time of $l_c$ subchains of the SFC in this M/M/m setting can be calculated as,  
\begin{align}
r_{s_{M/M/m}} =  \prod_{v \in \mathcal{V}_s}\Big(1 - (1 - p_{v})^{l_c}\Big) \times \prod \limits_{n \in \mathcal{N}_s} p_n
\label{eq:relMMm}
\end{align}

\begin{align}
D_{s_{M/M/m}} = \sum\limits_{v \in \mathcal{V}_s} \frac{l_c}{\mu_v} \times \bigg(1+\frac{\varrho}{l_c(1-\frac{\lambda_s}{\mu_v})}\bigg)
\label{eq:delayMMm}
\end{align}    
where,
\begin{align*}
\varrho = \frac{(\frac{l_c\lambda_s}{\mu_v})^{l_c}}{{l_c}!(1-\frac{\lambda_s}{\mu_v})} \times \bigg( \frac{1}{1+\frac{(\frac{l_c\lambda_s}{\mu_v})^{l_c}}{{l_c}!(1-\frac{\lambda_s}{\mu_v})}+\sum \limits _ {i=1}^{l_c-1} \frac{(\frac{l_c\lambda_s}{\mu_v})^i}{i!}} \bigg)
\end{align*}

Dividing a single chain SFC into multiple subchains of SFC with lesser capacity VNFs enhances the reliability of the service chain without using backups \cite{PKT_2019}. At the same time, as shown in Equations \eqref{eq:delayMM1} and \eqref{eq:delayMMm}, dividing the VNFs of SFC into lesser capacity VNFs of multiple subchains increases the response time linearly. Therefore, SFC subchaining should be done without violating the delay constraint $\Psi_s$ of service request in the process of enhancing the reliability without backups.  

\subsection{Guaranteeing the Reliability Requirement of SFCs}
The number of SFC subchains that can be created to enhance the reliability of an SFC depends on the maximum allowed delay $\Psi _s$ of a service request. Hence, enhancing the reliability by SFC subchaining may not be sufficient to meet the reliability requirements of all the service requests. In such cases, in addition to the subchaining, redundant backups are added one by one to the less reliable VNFs to guarantee the reliability requirement of service requests. Methods for calculating reliability and guaranteeing the reliability requirement of service requests are given in Algorithms \ref{algo:alg1} and \ref{algo:alg2}, which utilize the SFC subchaining technique to reduce the number of redundant backup resources required to guarantee the reliability requirement. In this work, we assume that the links between physical nodes and virtual nodes are completely reliable.

\begin{algorithm}[t]
	%\small
	\footnotesize
	\caption{The reliability calculation and subchaining procedure}
	
	\label{algo:alg1}
	\hspace*{\algorithmicindent} \textbf{Input:} $G_s = (\mathcal{V}_s, \mathcal{E}_s), \Psi_s, \Delta_s$ \\
	\hspace*{\algorithmicindent}  \hspace{-.2cm}\textbf{Output:} Reliability of an SFC $r_s$, and number of SFC subchains $l_1$ and \hspace*{1.3cm} $l_2$ are created for M/M/1 and M/M/m settings, respectively
	\begin{algorithmic}[1]
		\State Calculate the reliability of an SFC $r_s$ using Equation \eqref{eq:relSFC}
		\State Initialize the number of subchains $l_1$ = 1 and $l_2$ = 1 
		\If {$r_s \ge \Delta_s$} 
		\State Return $r_s$, and $l_1$ and $l_2$
		\Else 
		\While{$r_s < \Delta_s$}
		\If {M/M/1 setting}
		\State $l_1 = l_1 + 1$
		\State Create $l_1$ number of subchains from an SFC with the \hspace*{1.3cm}resource capacity of $\frac{c_v}{l_1}, ~ \forall v \in \mathcal{V}_s$
		\State Compute the overall delay of new subchains $l_1$ using \hspace*{1.3cm}Equation \eqref{eq:delayMM1} and assign it to $D_s$ 
		\If {$D_s \le \Psi_s$}
		\State Compute the reliability of new subchains using \hspace*{1.7cm}Equation \eqref{eq:relMM1} and assign it to $r_s$
		
		\Else 
		\State $l_1 = l_1 - 1$
		\State Return $r_s$ and $l_1$ 
		\EndIf
		\ElsIf{M/M/m setting}
		\State $l_2 = l_2 + 1$
		\State Create each VNF of an SFC into $l_2$ number of replicas \hspace*{1.3cm}with the resource capacity of $\frac{c_v}{l_2}$ 
		\State Compute the overall delay of new subchains $l_2$ using Equation \hspace*{1.2cm} \eqref{eq:delayMMm} and assign it to $D_s$
		\If {$D_s \le \Psi_s$}
		\State Compute the reliability of new subchains using \hspace*{1.7cm}Equation \eqref{eq:relMMm} and assign it to $r_s$
		
		\Else 
		\State $l_2 = l_2 - 1$
		\State Return $r_s$ and $l_2$
		\EndIf
		\EndIf 
		\EndWhile 
		\EndIf
	\end{algorithmic}
\end{algorithm}

\begin{algorithm}[t]
	\footnotesize	 
	\caption{The reliability calculation and reliability requirement guaranteeing procedure}
	
	\label{algo:alg2}
	\hspace*{\algorithmicindent} \textbf{Input:} $G_s = (\mathcal{V}_s, \mathcal{E}_s), \Psi_s, \Delta_s, r_s, l_1, l_2, p_n$ \\
	\hspace*{\algorithmicindent}  \hspace{-.1cm}\textbf{Output:} Guarantees the reliability requirement $\Delta_s$ for the service \hspace*{1.7cm}request $s \in \mathcal{S}$
	\begin{algorithmic}[1]	
		\If {$r_s \ge \Delta_s$}
		\State Redundant backup is not required
		\Else 
		\State Sort the VNFs of an SFC with respect to reliability in ascending order
		\State Start assigning backups from least reliable VNF to highest reliable \hspace*{0.4cm}VNF in a circular manner 
		\State $r_s$ is the reliability of an SFC computed by the subchaining procedure \hspace*{0.4cm}using Algorithm \ref{algo:alg1} 
		\State $l_1$ and $l_2$ are number of subchains created by subchaining procedure \hspace*{0.4cm}of M/M/1 and M/M/m settings, respectively 
		\State $p_n$ is the reliability of substrate nodes			
		\State Initialise $u$ = 2 (initially, one backup is assigned along with primary \hspace*{0.4cm}VNFs of a subchain), $w$ = 0 (number of subchains in which the same \hspace*{0.4cm}number of backups assigned to all VNFs of subchains), $j_1=0$, $j_2 = 0$
		\State Let $Q$ be the set of VNFs for which backup is assigned and initially \hspace*{0.4cm}it is null 
		\While {$r_s < \Delta_s$} 
		\If {M/M/1 setting}
		\State $Q = {Q \cup \{\arg \min \limits_{v \in {\mathcal{V}_s - Q}} p_v\}}$
		\State $h_1 = \prod \limits_{v \in \mathcal{V}_s}\bigg(1-(1-p_v)^u\bigg)$ 		\State $h_2 = \Bigg(\prod \limits_{v \in Q}\bigg(1-(1-p_v)^{u}\bigg) ~~\quad\times \hspace*{5cm}\prod 
		\limits_{v \in \mathcal{V}_s - Q }
		\Big(1 - (1 - p_v)^{u - 1}\Big) \Bigg)$ 		\State $h_3 = \prod \limits_{v \in \mathcal{V}_s}\Big(1 - (1 - p_v)^{u - 1}\Big)$ 
		\State Reliability $r_s = \Bigg(\bigg(1 - \Big((1-h_1)^w \times (1 - h_2) \times (1- h_3)^{l_1 - 1 - w}\Big)\bigg) \times \hspace*{7cm}\prod \limits_{n \in \mathcal{N}_s} p_n\Bigg)$
		\State $j_1$ = $j_1$ + 1
		\If {$j_1 == |\mathcal{V}_s|$}
		\State $j_1$ = 0, $w = w + 1$, $Q = \{\}$
		\EndIf 
		\If {$w$ == $l_1$}
		\State $w = 0, ~u = u + 1$
		\EndIf 
		\ElsIf {M/M/m setting}
		\State $Q = {Q \cup \{\arg \min \limits_{v \in {\mathcal{V}_s - Q}} p_v\}}$
		\State Reliability $r_s = \Bigg(\prod \limits_{q \in Q} \bigg(1-(1-p_q)^{l_2+1}\bigg) \times \hspace*{4cm}\prod 		\limits_{i \in \mathcal{V}_s - Q }
		\bigg(1-(1-p_i)^{l_2}\bigg) \times \prod \limits_{n \in \mathcal{N}_s} p_n\Bigg)$   
		\State $j_2 = j_2 + 1$
		\If {$j_2 == |\mathcal{V}_s|$}
		\State $l_2 = l_2 + 1, j_2 = 0, \text{and} ~Q = \{\}$
		\EndIf  
		\EndIf 
		\EndWhile 
		\EndIf 
	\end{algorithmic}
\end{algorithm}

The reliability calculation and subchaining procedure is given in Algorithm \ref{algo:alg1}. For the given input of service request graph and its requirements, Algorithm \ref{algo:alg1} applies subchaining procedure and outputs the reliability of an SFC chain along with the number of subchains created. First, the reliability of an SFC chain $r_s$ is calculated using Equation \eqref{eq:relSFC} and returned if it satisfies the requirement $\Delta_s$ (lines 1 to 4). If the reliability value $r_s$ is less than the requirement $\Delta_s$, then subchain count $l_1$ (or $l_2$) is increased and the SFC chain is divided into $l_1$ (or $l_2$) number of subchains in each iteration to enhance the reliability. The process continues till either the reliability requirement is met or the maximum delay constraint $\Psi_s$ is violated (lines 6 to 16 for M/M/1 and lines 17 to 28 for M/M/m setting). The maximum number of subchains that can be created is limited by the delay constraint $\Psi_s$. Algorithm 1 always finds a solution in finite number of iterations irrespective of the number of SFC subchains created. Because either average response time of the latest subchains of SFC $D_s$ violates the delay constraint $\Psi_s$ or the improved reliability of the latest subchains of SFC $r_s$ meets the reliability requirement $\Delta_s$ . Therefore, Algorithm \ref{algo:alg1} terminates in a finite number of iterations.

Algorithm \ref{algo:alg2} is designed to guarantee the reliability requirement of service requests for the cases the subchaining procedure described in Algorithm \ref{algo:alg1} could not satisfy the reliability requirement. For the given input of service request graph and its requirements and the output of the Algorithm \ref{algo:alg1}, Algorithm \ref{algo:alg2} outputs the service graph which guarantees the reliability requirement by adding backups in incremental manner. If the output of Algorithm \ref{algo:alg1} satisfies the reliability requirement, then backups are not required (lines 1 to 3). First, the VNFs of an SFC are sorted with respect to reliability value in ascending order to provide backups in a circular manner from least reliable VNF to highest reliable VNF of a subchain and the input parameters are initialized (lines 4 to 10). At any point, backup is added to only one VNF of a subchain in order to reduce the number of redundant resources (line 13 for M/M/1 and 26 for M/M/m). Since backup is added one by one, the SFC can consist of three different subchain structures in M/M/1 setting as shown in Figure \ref{Fig:sc1}: i) backup is already assigned to all the VNFs in the subchain (line 14), ii) backup is assigned to part of the VNFs of the subchain and backup is not yet added to the remaining VNFs of the subchain (line 15), and iii) backup is not yet added to any of the VNFs in the subchain (line 16). Since the subchains are in parallel and multiple subchains can have the same backup structure, the reliability of the whole chain is computed based on the three backup structures reliability values ($h1,h2,h3$) and reliability of the substrate nodes ($p_n, \forall n \in \mathcal{N}_s$) where the subchains are placed (line 17). In M/M/m setting, two different subchain structures are possible while adding backups as shown in Figure \ref{Fig:sc2}: i) the VNFs to which a backup is added (i.e., such VNF has an additional redundant VNF), ii) the VNFs to which no backups are added (line 27). In both M/M/1 and M/M/m settings, backups are added one by one until the reliability requirement is met (lines 18 to 24 for M/M/1 and 28 to 33 for M/M/m). Algorithm \ref{algo:alg2} always finds a solution in finite number of iterations irrespective of the number of redundant VNFs assigned. This is because the reliability of an SFC is increasing after each iteration by adding a new backup VNF to the least reliable VNF in the SFC chain. Therefore, Algorithm \ref{algo:alg2} terminates in
a finite number of iterations in polynomial time. The average running times of Algorithms \ref{algo:alg1} and \ref{algo:alg2} are given in Table \ref{tab: running_time} in Section VI B. 
 
\begin{figure}[]
	\centering
	\begin{subfigure}{0.45\textwidth} 
		\vspace{-.5cm}
		\includegraphics [width=\textwidth]{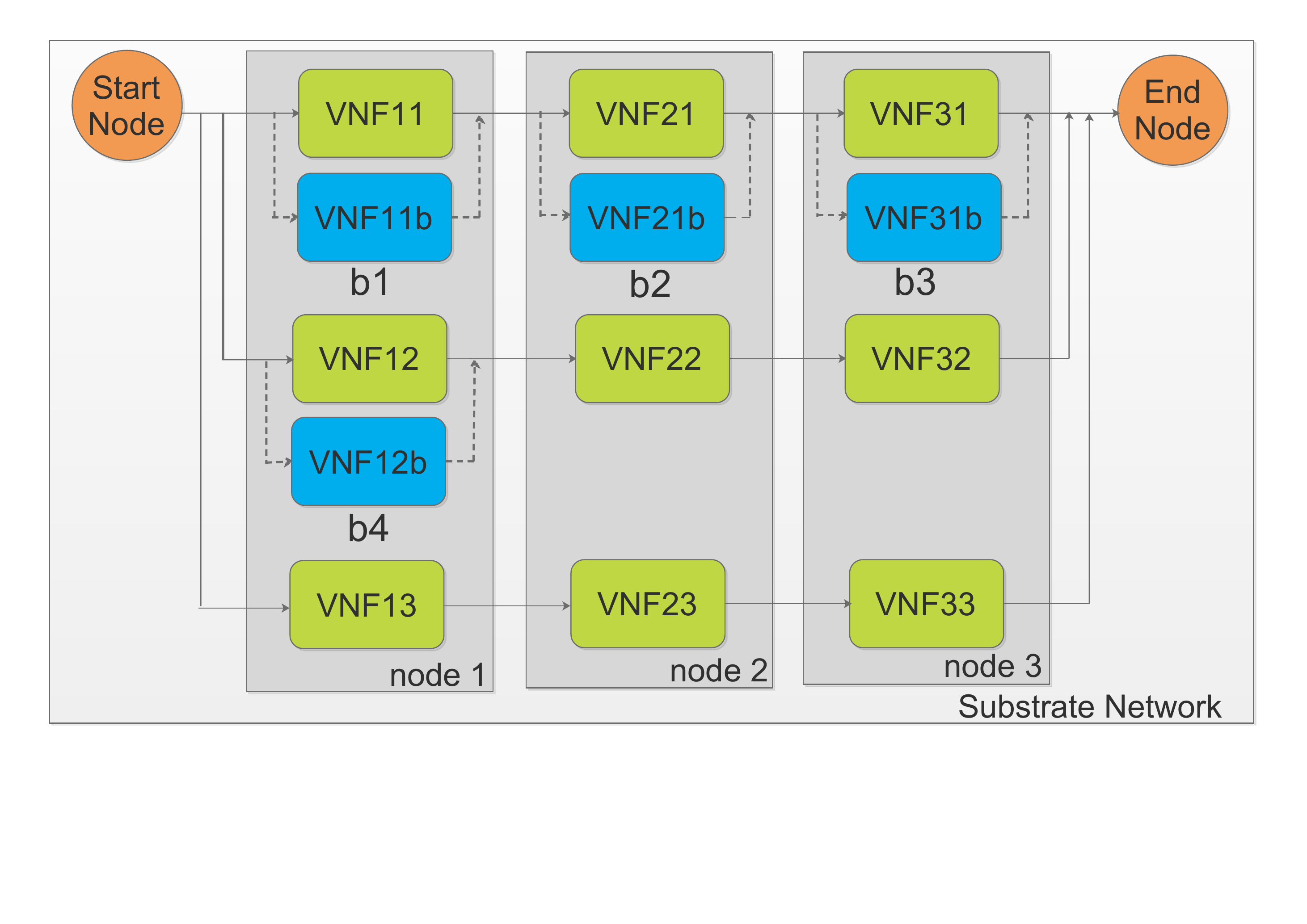}
		\vspace{-1.7cm}
		\caption{Backup structures for an SFC with M/M/1 tandem network of queues.} 				
		\label{Fig:sc1}
	\end{subfigure}
	\begin{subfigure}{0.45\textwidth} 
		\vspace{-0.5cm}
		\includegraphics [width=\textwidth]{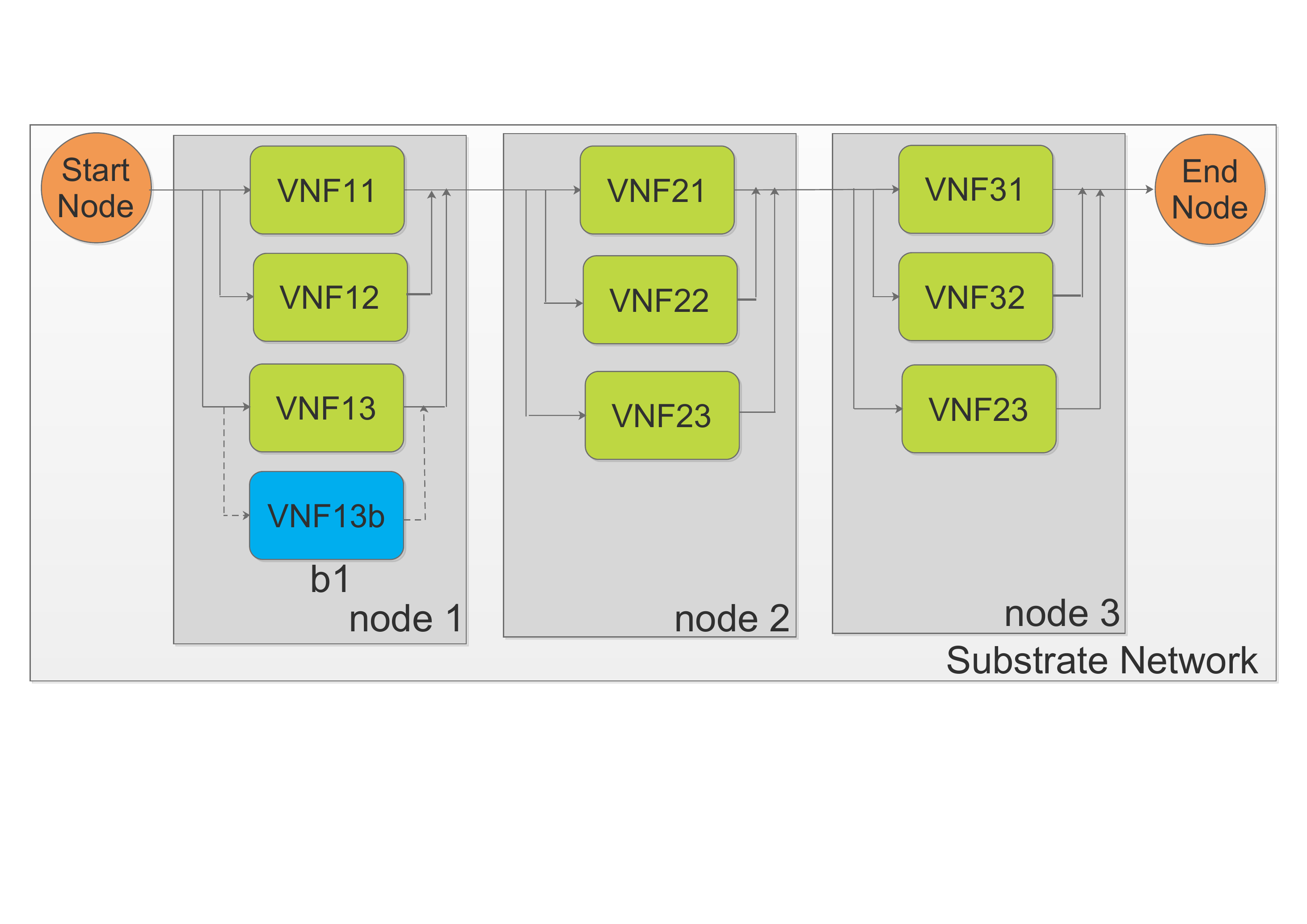}
		\vspace{-2cm}
		\caption{Backup structures for an SFC with M/M/m tandem network of queues.} 				
		\label{Fig:sc2}
	\end{subfigure}
	\caption{Subchaining and backup structures of an SFC.}
\end{figure}

\section{Placement of Reliable SFCs}
\label{placement}
This section describes about optimal placement of the designed reliable SFC graphs in the NFV based 5G infrastructure with minimal number of physical resources. Optimal on-demand dynamic resource allocation is crucial to provide diverse set of communication services using SFCs in 5G networks, and also reduce CAPEX and OPEX. 
SFC placement or resource allocation problem is the process of mapping SFCs to physical/substrate network in optimal manner while meeting the SLAs. First, we mathematically model the reliable SFC placement problem using ILP and prove that the problem is NP-hard. Then, we propose a modified matching algorithm for solving large scale instances of the problem in polynomial time.

\subsection{ILP Mathematical Formulation}
\begin{enumerate}
	\item Decision variables: 
	The binary variable $x_{ns}$ is used to represent that all VNFs of an SFC $s$ are placed on the processing node $n$, which can be expressed as,  
	\begin{equation}
	x_{ns} = \begin{cases} 1, &\text{if all VNFs of the SFC $s \in \mathcal{S}$ are placed} \\&\text{on the processing node $n \in \mathcal{N}$}, \\ 
	0, & \text{otherwise}. \end{cases} 
	\end{equation} 
	
	We assume that all subchained VNFs from the original VNF should be placed on the same physical node because it reduces the consumption of physical resources to minimize operational expenditures for network operators. Moreover, placing the VNFs of an SFC on the same physical node reduces switchover time, amount of bandwidth consumed to transfer VNF internal state information from primary to backup VNFs, and inter VNF communication delays.
	
	The binary variable $a_n$, used to represent that  a processing node $n$ is active, which can be expressed as,
	\begin{equation}
	a_n = \begin{cases} 1, &\text{if the processing node $n \in \mathcal{N}$ is active, i.e.,} \\ &\text{if it hosts at least one SFC $s \in \mathcal{S}$}, \\ 
	0, & \text{otherwise}. \end{cases} 
	\end{equation} 
	
	\item 
	Objective function: The objective is to minimize the number of active physical nodes allocated for SFCs deployment, which can be expressed as,
	\begin{equation}
	Z: \text{min} \sum_{n \in \mathcal{N}} a_n
	\end{equation}
		
	\item 
	Capacity constraint: The capacity requirements of SFCs placed on any physical node should not exceed that server's available resource capacity, which can be mathematically expressed as,
	\begin{equation}
	\sum_{s \in \mathcal{S}} x_{ns} \times c_s \leq c_n \times a_n, \forall n \in \mathcal{N}
	\end{equation}
	where $c_n$ denotes the available CPU resource capacity in the physical node $n$ and $c_s$ is the total CPU requirement of the SFC chain $s$ defined in Equation \eqref{eq:resReqSFC}. We assume that all substrate nodes have the same resource
	capacity to host VNFs of an SFC.

	In general, multiple resource type (e.g., CPU, memory, and storage) constraints can be modeled based on the service type requirements, which can be expressed as,
	 
	\begin{equation}
	\sum_{s \in \mathcal{S}} x_{ns} \times c_{sr} \leq c_{nr}, \forall n \in \mathcal{N}, \forall r \in \mathcal{R}
	\end{equation}
	where $r$ is a particular resource type (e.g., memory) from the resource type set $\mathcal{R}$ which includes CPU, memory, and storage. $c_{sr}$ denotes resource requirement of type $r$ for the SFC $s$ and $c_{nr}$ denotes the available capacity of resource type $r$ in the node $n$. For simplicity, we consider only virtual CPU resource requirement for VNFs to process the incoming traffic \cite{power_aware}.
	 
	\item 
	Placement constraint: The SFC should be placed in any one of the physical nodes only, which can be expressed as,
	\begin{equation}
		\sum_{n \in N} x_{ns} = 1, \forall s \in \mathcal{S}
	\end{equation}
	
\end{enumerate}

\begin{theorem} 
Reliable SFC placement problem $Z$ is NP-hard.
\end{theorem}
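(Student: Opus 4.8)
The plan is to prove NP-hardness by a polynomial-time reduction from the classical \textbf{Bin Packing} problem, which is well known to be NP-hard. The key observation is that, under the stated modeling assumptions---namely that every SFC $s$ must be placed in its entirety on a single physical node (the placement constraint $\sum_{n \in \mathcal{N}} x_{ns} = 1$) and that all substrate nodes share a common capacity $c_n = c$---the placement problem $Z$ is structurally an encoding of Bin Packing: each SFC $s$ is an item of size $c_s$, each active physical node is a bin of capacity $c$, and minimizing $\sum_{n \in \mathcal{N}} a_n$ is exactly minimizing the number of bins used. I would first state the decision version of $Z$ precisely: given the demands $\{c_s\}_{s \in \mathcal{S}}$, the common capacity $c$, and an integer $k$, decide whether there exists a capacity-feasible assignment of all SFCs to nodes using at most $k$ active nodes.

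Next, given an arbitrary Bin Packing instance---items with sizes $w_1, \ldots, w_m$, bin capacity $B$, and target $k$---I would construct in polynomial time an instance of $Z$ by creating one SFC $s_i$ with demand $c_{s_i} = w_i$ for each item, setting every node capacity to $c = B$, and providing $m$ nodes (which is always sufficient). This construction is clearly polynomial in the size of the input.

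I would then argue the two directions of equivalence. In the forward direction, any packing of the items into $k$ bins yields a placement by setting $x_{ns} = 1$ whenever item $i$ is assigned to bin $n$; the capacity constraint $\sum_{s \in \mathcal{S}} x_{ns}\, c_s \le c$ holds on every used node precisely because the corresponding bin was not overfilled, and exactly $k$ nodes are activated. Conversely, any feasible placement of $Z$ with $k$ active nodes induces a packing into $k$ bins, since each SFC sits on exactly one node whose aggregate load never exceeds its capacity. Hence the Bin Packing instance is a \emph{yes}-instance if and only if $Z$ admits a solution with at most $k$ active nodes, and NP-hardness of $Z$ follows from that of Bin Packing.

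I do not anticipate a genuine mathematical obstacle, since the placement stage is a direct instance of Bin Packing; the only care required is bookkeeping. In particular, the main point to justify is that the reliability and delay requirements handled in the SFC-design stage (Section~\ref{SFC_design}) are already resolved \emph{before} placement, so that the reliability-enhanced chains enter the placement problem purely through their aggregate resource demands $c_s$. This is exactly what the ILP of Section~\ref{placement} encodes, so the reduction legitimately targets the placement objective $Z$ in isolation, and the mapping between SFC demands and Bin Packing item sizes is faithful.
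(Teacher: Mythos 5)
Your proof is correct and follows essentially the same route as the paper: a polynomial-time reduction from Bin Packing in which each item becomes an SFC, each bin a substrate node, and minimizing active nodes corresponds to minimizing bins. You are in fact slightly more careful than the paper, since you state the decision version explicitly and argue both directions of the equivalence, whereas the paper only describes the transformation.
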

\begin{proof}
 Let A be the reliable SFC placement problem and B be the bin packing problem. Bin packing problem is one of the famous combinatorial optimization problems and it is an NP-hard problem \cite{Garey_Johnson}, which is defined as follows: given a set of items, each having an integer weight, and a set of identical bins each having an integer capacity $c$, the problem consists of packing all the items into minimum number of bins, without exceeding the maximum capacity $c$ for any bin. To prove that the problem A is NP-hard, it is sufficient to show that an instance of problem B can be reduced to an instance of problem A in polynomial time, i.e., B $\leq_P$ A~\cite{Cormen_2009}.

 We can transform an instance of problem B into an instance of problem A in the following way: i) consider each item in the bin packing problem as an SFC in SFC placement problem, ii) set the integer weight of each item to be equal to the resource requirement of each SFC iii) consider total number of available bins as total number of substrate nodes, iv) set the capacity of each bin to be equal to the resource availability in each substrate node, and v) consider that each item is placed in only one bin as an SFC is placed in only one of the substrate nodes. The transformation operation can be done in polynomial time of the input size. 
 
 Hence, problem B is reducible to problem A in polynomial time. If A is not NP-hard, then B is also not NP-hard (since B is reducible to A), which is a contradiction. Therefore, it can be concluded that A is also an NP-hard problem.        
\end{proof}

\subsection{Matching Algorithm Based Reliable SFC Placement}
SFC placement problem, being an NP-hard problem, takes super-polynomial time to solve when the input size is large. We devise a matching game based solution to overcome the computational complexity. A well-known Gale-Shapley \cite{Gale_Shapley} matching algorithm framework based on deferred acceptance concept is used to place SFCs onto the substrate nodes. There are three types of matching techniques: one-to-one matching, many-to-one matching, and many-to-many matching. Since multiple SFC chains can be placed on the same substrate node and an SFC is placed on only one substrate node, many-to-one matching technique is used in our solution design. Since each SFC may have different set of VNFs, and resource requirement for each SFC may vary for different service types, classical matching theory approach cannot be applied directly for resource-efficient SFC placement problem. Hence, we propose a modified matching algorithm to place SFCs on the substrate nodes efficiently. SFCs propose to  substrate nodes based on the preferences of SFCs in SFC-optimal stable matching procedure, whereas substrate nodes propose to SFCs based on the preferences of substrate nodes in substrate node-optimal stable matching procedure. We consider SFC-optimal stable matching in order to meet the SLAs of user/service requests.

\begin{figure*}[t]
	\centering
	\begin{subfigure}{0.47\textwidth} 
		\vspace{-1.3cm}
		\hspace{-1.5cm}
		\includegraphics [scale=0.37]{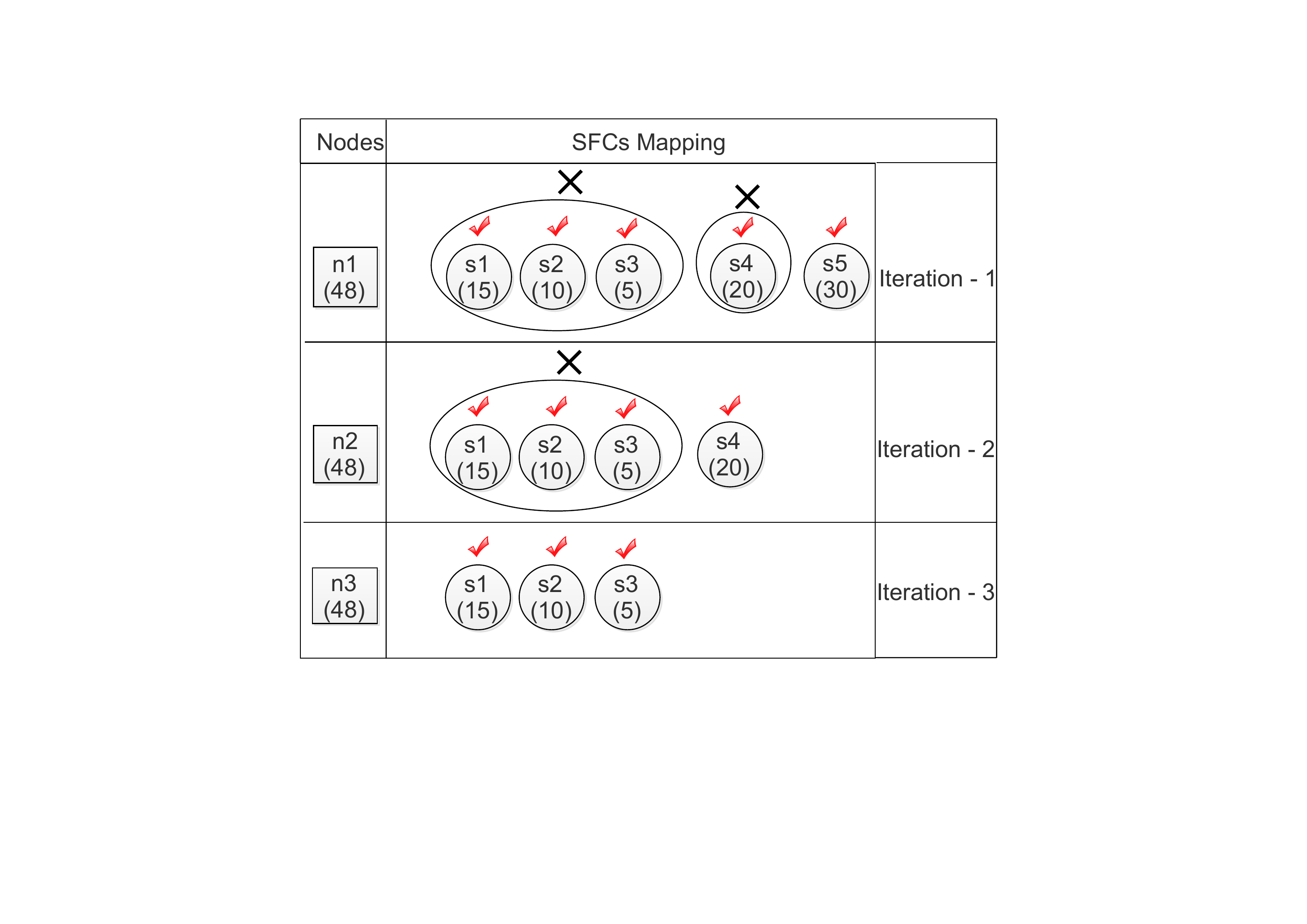}
		\hspace{-1.5cm}
		\vspace{-2.7cm}
		\caption{Many-to-one matching resource allocation strategy \cite{Pham_2018}.} 				
		\label{Fig:m1}
	\end{subfigure}
	\hspace{0.75cm}
	\begin{subfigure}{0.47\textwidth}
		\vspace{-1.8cm}
		\hspace{-2.2cm}
		\includegraphics [scale=0.4]{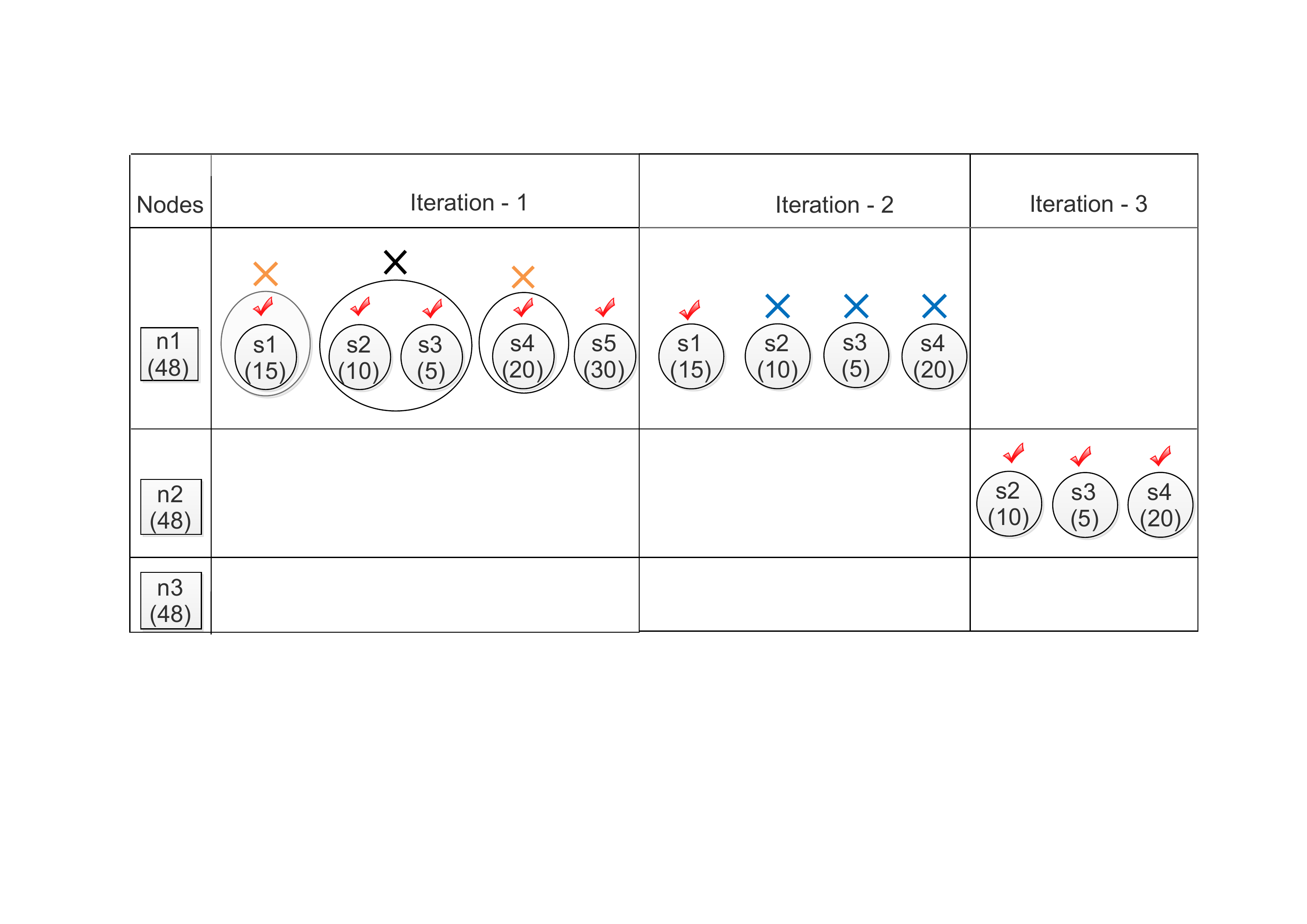}
		\hspace{-1.5cm}
		\vspace{-3.1cm}
		\caption{Proposed many-to-one matching resource allocation strategy.} 				
		\label{Fig:m2}
	\end{subfigure}
	\caption{Many-to-one matching resource allocation strategies.}
	\label{Fig:mm}
	\vspace{-.5cm}
\end{figure*}

\begin{figure} 
	\centering
	\vspace{-.2cm}
	\includegraphics [scale=0.3]{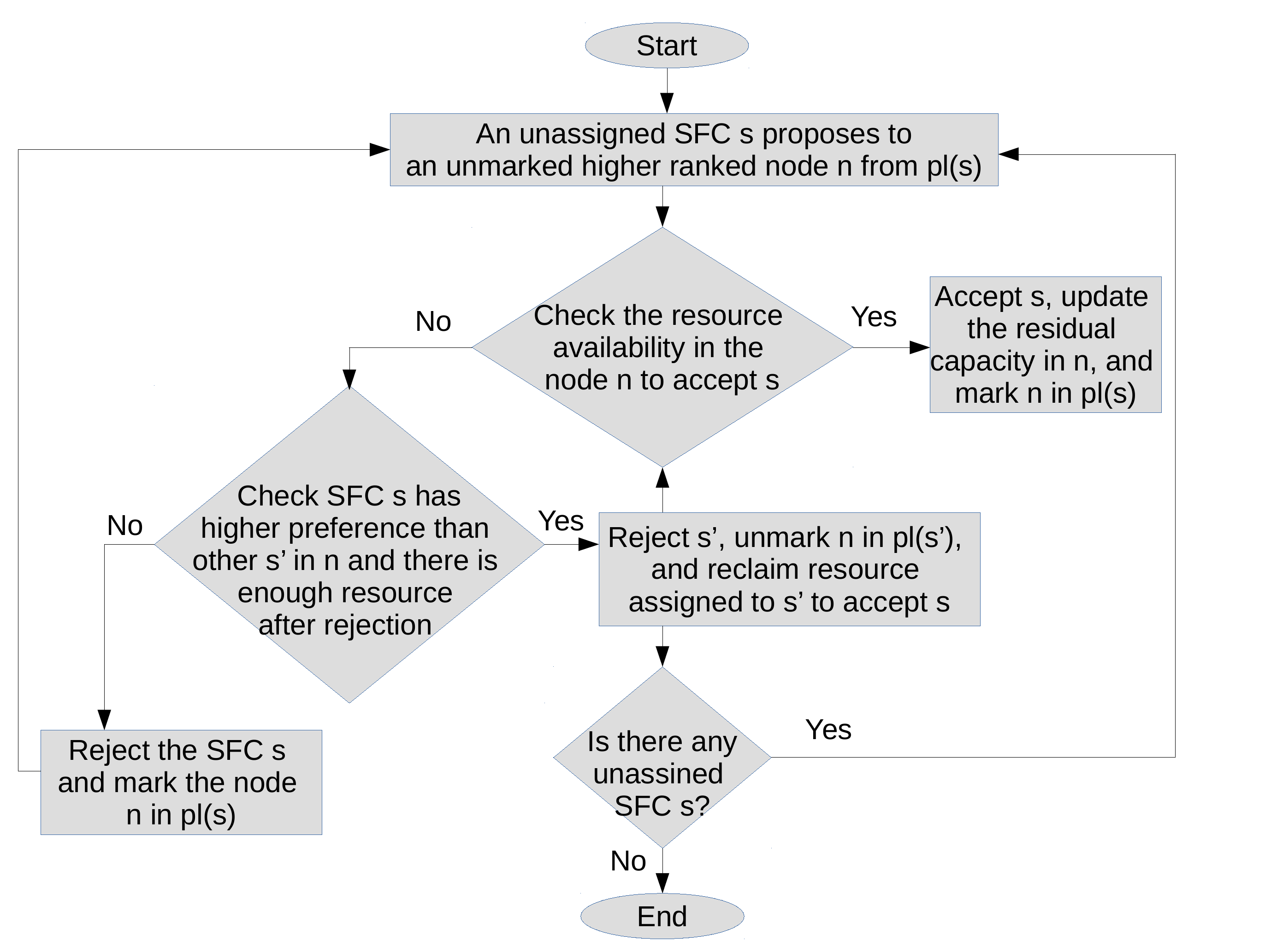}
	\caption{Resource-efficient assignment of SFCs to nodes.} 				
	\label{Fig:flowchart}
\end{figure}

A many-to-one matching game consists of two disjoint sets of groups with finite number of players $\mathcal{N}=\{n_1,n_2,\ldots,n_{|\mathcal{N}|}\}$ and  $\mathcal{S}=\{s_1,s_2,\ldots,s_{|\mathcal{S}|}\}$, where $\mathcal{N} \cap \mathcal{S} = \emptyset$. Each player prepares a preference list to match with the player in the other group. A preference list is the order of preference in which a player in one group ranks all the players in the other group based on some performance metric. We use the preference relation symbols $\succ_{s_j}$ and $\succ_{n_i}$ to denote the preference orderings of players $s_j \in \mathcal{S}$ and $n_i \in \mathcal{N}$, respectively. For example, $s_2 \succ_{n_1} s_1$ indicates that player $n_1$ gives higher preference to $s_2$ than to $s_1$. The preference list of a player $s_j$ is represented as $pl(s_j) = \{n_2,n_5,n_3\ldots,n_{|\mathcal{N}|}\}$, if player $s_j$'s first choice is $n_2$, second choice is $n_5$, and so on. Each player's preference list in both the groups should be strict, complete, and transitive. Strict preference relation indicates that each player $s_j \in \mathcal{S}$ has a strict preference relation $\succ_{s_j}$  over the set of players $n_i \in \mathcal{N}$ (i.e., no two players can be ranked with the same preference), and vice versa. Complete preference relation indicates that the preference list should include all the players in the opposite group. Transitive preference relation for a player $s_j$ indicates that if $n_2$ has higher preference than $n_5$ and $n_5$ has higher preference than $n_3$, then $n_2$ has higher preference than both $n_5$ and $n_3$. Individual rationality of all the players of both the groups is considered in this matching algorithm based approach.   
 
Matching is a function $\tau :\mathcal{S} \cup \mathcal{N} \to 2^{\mathcal{S} \cup \mathcal{N}}$ which maps from the set $\mathcal{S} \cup \mathcal{N}$ into the subsets of $\mathcal{S} \cup \mathcal{N}$ (in other words, every player of $\mathcal{S}$ is mapped to exactly one player of $\mathcal{N}$). We consider that the sets $\mathcal{N}$ and $\mathcal{S}$ indicate the set of substrate nodes and SFC requests, respectively. Each node belonging to the group $\mathcal{N}$ (i.e., $n_i \in \mathcal{N}$) has positive resource capacity $c_{n_i} \in \mathbb{Z}^+$ to accommodate multiple SFCs, say g number of SFCs, from the group $\mathcal{S}$. The SFCs $s_j \in \mathcal{S}$ have resource requirement of $c_{s_j} \in \mathbb{Z}^+ \leq c_n$. Multiple SFCs in $\mathcal{S}$ can be mapped to a single node of $\mathcal{N}$ based on the available resource (residual capacity). $\tau(s_j)$ represents a node $n_i \in \mathcal{N}$ to which an SFC $s_j \in \mathcal{S}$ is assigned to, and $\tau(n_i)$ represents an SFC $s_j \in \mathcal{S}$ assigned to a node $n_i \in \mathcal{N}$.  
A pair ($s_j, n_i$) is said to be acceptable pair iff both $s_j$ and $n_i$ prefer each other based on their preference lists, i.e., $\tau(n_i) = s_j$ and $\tau(s_j)= n_i$.

A pair is said to be a blocking pair ($s_j,n_i)$ if both of them prefer to be matched with each other rather than being matched according to matching function $\tau$. A matching can be blocked by an individual player as well if the player prefers being single to being matched with a partner from the preference list. If a matching $\tau$ is not blocked by an individual or pair, then it is said to be stable.   

In classical many-to-one stable resource allocation problem \cite{Pham_2018} \cite{Gale_Shapley}, if there is a request from higher ranked SFC and the available resource (residual capacity) in the node is not enough to accept the proposal, then all the lesser preferred accepted SFCs are rejected. It is done in order to accept the higher ranked SFC. In this method, the rejected SFCs are not allowed to propose again to the same node in the next iterations. It results in inefficient utilization of resources. We illustrate this with an example shown in Figure \ref{Fig:mm}.  We consider three substrate nodes (n1, n2, and n3) each with the capacity of 48 vCPUs and five SFCs (s1, s2, s3, s4, and s5) with capacity requirements of 15, 10, 5, 20, and 30 vCPUs, respectively. For an easy illustration of example, we consider that all SFCs have the same node preference list pl(s) which is (n1, n2, n3) and all nodes have the same SFC preference list pl(n) which is (s5, s4, s1, s2, s3). 

As shown in Figure \ref{Fig:mm} (a), initially, the node n1 accepts the proposals of s1, s2, s3 (total vCPUs requirement is 30) in iteration 1. When the node~n1 receives proposal from a higher ranked SFC s4 in iteration 1, since the total resource requirement of SFCs s1, s2, s3, and s4 exceeds the total capacity of node n1 (50 > 48), the node n1 rejects all the already accepted SFC proposals (s1, s2, s3) in order to accept the higher ranked SFC s4. Similarly, when s5 proposes to the node n1, s4 is rejected in favor of s5. Since rejected SFCs (s1, s2, s3, s4) are not allowed to propose to the node n1 again, they propose to the node n2 in iteration 2. In iteration 2 also, the SFCs s1, s2, and s3 are rejected by the node n2 in favor of s4 when capacity requirement exceeds the available resource limit. Hence, in iteration 3 the rejected SFCs propose to the node n3 and are accepted. This strategy requires three nodes to accommodate all the five SFCs. 

Consider an SFC $s_j$ is rejected by a node $n_i$ in iteration 1. In our design, we allow $s_j$ to propose again to the node $n_i$ in further iterations until either the available resource in $n_i$ is not enough to accommodate $s_j$ or $s_j$ is lesser preferred to the already accommodated SFCs in the node $n_i$. We also precompute the resource to be reclaimed when rejecting the lesser preferred SFCs before actually rejecting them. Hence, the already  accommodated lesser preferred SFCs are rejected only if the total (residual + reclaimed) resource is enough to accommodate the higher ranked SFC. Resource-efficient assignment of SFCs to nodes is shown in Figure \ref{Fig:flowchart}. 

As shown in Figure \ref{Fig:mm} (b), SFCs s1, s2, and s3 are accepted initially in iteration 1 since there is enough resource in the node n1. When s4 proposes to the node n1, the required resource amount exceeds the available resource capacity (50 > 48) to accommodate all SFCs s1, s2, s3, and s4. Since s4 has higher preference in the node n1's preference list than s1, s2, s3 and the additional capacity to be obtained by rejecting the lesser preferred SFCs is more than the required amount, the node n1 rejects the lesser preferred SFCs one by one till the resource requirement is met. Hence, only s3 and s2 are rejected in order to accommodate s4. At this point of time, s1 and s4 are accepted and s2 and s3 are rejected. Similarly, when s5 proposes to the node n1, s1 and s4 are rejected in favor of s5 in iteration 1. In iteration 2, the rejected SFCs propose again to node 1 (remaining capacity 18 vCPUs). First, s1 is accepted by node 1 (remaining capacity 3 vCPUs), then s2 and s3 are rejected by the node 1 due to lack of resource availability in node 1 (10 > 3, 5 > 3) and they are lesser preferred compared to s1 in the node~1's preference list. When s4 proposes to the node 1, the available remaining capacity is not sufficient (20 > 3) to accommodate both s1 and s4. Though the SFC s4 has higher preference than s1, rejecting the SFC s1 and reclaiming the assigned capacity of s1 is not sufficient (20 > 18) to accommodate s4. Therefore the SFC s4 is rejected by the node 1. At the end of iteration 2, only the SFC s1 is accepted and the remaining SFCs s2, s3, and s4 are rejected. Since s1, s2, and s4 are rejected due to lack of resource in the node 1, they propose to the node 2 in the third iteration and are being accepted because of enough resource availability. Our strategy requires only two nodes to accommodate all five SFCs and improves the overall resource utilization.  

\begin{algorithm}[t]
	%\small
	\footnotesize
	\caption{Modified matching algorithm based reliable SFC placement}
	\label{algo:Matching algorithm}
	\hspace*{\algorithmicindent} \textbf{Input:} The set of SFCs $\mathcal{S}$ and the set of available nodes $\mathcal{N}$ in \hspace*{1.6cm}the substrate network \\
	\hspace*{\algorithmicindent}  \hspace{-.05cm}\textbf{Output:} SFC-optimal stable matching result is produced such \hspace*{1.8cm}that all SFCs are placed on the substrate network
	\begin{algorithmic}[1]
		\State Check resource availability at the substrate network		
		\State Prepare SFCs' preference list $pl(s_j), \forall s_j \in \mathcal{S}$
	    \State Prepare substrate nodes' preference list $pl(n_i), \forall n_i \in \mathcal{N}$
					
		\While $~\exists s_j \in \mathcal{S}$ and all the available nodes are not marked in \hspace*{1.4cm}its preference list $pl(s_j)$
		    \State $n_i$ $\leftarrow$ choose the most preferred unmarked node from the \hspace*{0.5cm}preference list $pl(s_j)$  
				
		    \If {$c_{n_i} \ge c_{s_j}$} % 2nd if

      	      \For {$\forall v \in \mathcal{V}_{s_j}$}
		        \State Map $v$ to $n_i$
		      \EndFor

		      \State $c_{n_i}$ = $c_{n_i} - c_{s_j}$	
		      \State Mark SFC $s_j$ on the node $n_i$ preference list $pl(n_i)$
		      \State Mark node $n_i$ on the SFC $s_j$ preference list $pl(s_j)$

		    \Else

		     \If {SFC $s_j$ is most preferred than the already mapped \hspace*{1.3cm}(marked) SFCs $s^\prime_j \in \mathcal{S}$  on $pl(n_i)$, i.e., $s_j \succ_{n_i} s^\prime_j$, and \hspace*{1.3cm}$\big(\sum_{s^\prime_j \in \mathcal{S}} ~c_{s^\prime_j}\big) + c_{n_i} \ge c_{s_j}$}

		      \Repeat 
		        \State Reject the least preferred $s^\prime_j$ 		
		        \State $c_{n_i}$ = $c_{n_i} + c_{s^{\prime}_j}$		
		        \State Unmark node $n_i$ on the preference list $pl(s^\prime_j) $		
		        \State Unmark SFC $s^\prime_j$ on the preference list $pl(n_i)$
	         \Until {{$c_{n_i} \geq c_{s_j}$}}

			\ElsIf {$c_{n_i} <  c_{s_j}$}
		
				\State The node $n_i$ rejects the proposal from the SFC $s_j$ 	
				\State Mark node $n_i$ on the preference list $pl(s_j)$

			\EndIf

		 \EndIf

		\EndWhile 
				
	\end{algorithmic}
\end{algorithm}

Algorithm \ref{algo:Matching algorithm} provides the procedure for placing SFCs in substrate nodes based on the modified matching algorithm. The same procedure is given in flow chart format in Figure \ref{Fig:flowchart}. For the given input of the set of reliable SFC design graphs and physical network graph, Algorithm \ref{algo:Matching algorithm} outputs SFC-optimal stable matching result. First, resource availability is checked to place a set of SFCs in the substrate nodes (line 1). Preference lists of both SFCs and nodes are prepared (lines 2 and 3). According to \cite{Pham_2018}, nodes give higher preference to SFC requests which utilizes the maximum available resources i.e., leave out the least residual capacity unused, and SFCs give higher preference to nodes which have enough resource capacity to provide services and have higher reliability in the substrate network.

Algorithm \ref{algo:Matching algorithm} runs until all the SFCs are placed in appropriate substrate nodes. Initially, all SFCs and substrate nodes are free i.e., no SFC is assigned to any node. If there is an SFC $s_j$ which is not yet placed on any node, then $s_j$ first proposes to the highest ranked substrate node from its preference  list. Likewise, all the SFCs make proposals to their respective highest ranked substrate nodes sequentially (line 5). Each substrate node has capacity of $c_{n_i}$ and it can hold up to certain, $g$ (maximum number of SFCs that a substrate node can accommodate based on its resource capacity), number of proposals from SFCs at a time (lines 6 to 12). Nodes accept all the first $g$ number of proposals from the SFCs irrespective of their positions/ranking in the preference lists of nodes $pl(n_i)$. If a new SFC request $s_k$ comes to the substrate node after accommodating $g$ number of SFC requests and $s_k$ has higher preference than the already accommodated SFC requests, then first precompute the resource to be reclaimed by rejecting the lesser preferred SFCs before actually reject them. Hence, the already  accommodated lesser preferred SFCs are rejected only if the total (residual + reclaimed) resource is enough to accommodate the higher ranked SFC. The rejection happens sequentially with reclamation of the assigned resource of lesser preferred SFC $s^{\prime}_j$, and this process continues until $c_{n_i} + c_{s^{\prime}_j} \geq c_{s_k}$ (lines 14 to 20).  If either the SFC $s_k$ has lesser preference than the already accepted SFCs or the estimated resource by precomputation is not enough (even after reclaiming the resource of all the previously allocated SFCs), then the SFC $s_k$ is rejected (lines 21 to 24). Rejected SFC proposes to its next highest ranked substrate node from the preference list in the subsequent iterations. This procedure continues until the SFC $s_k$ is assigned to one of the preferred substrate nodes. This principle is called as deferred acceptance because initially an SFC $s_j$ can be accepted by a substrate node $n_i$ if there is resource availability and later $s_j$ can be rejected if there is a proposal from a higher ranked SFC to the same node $n_i$. The deferred acceptance based algorithm produces stable matching.

\begin{theorem}
	Many-to-one matching which employs deferred acceptance algorithm produces at least one stable matching result for general preferences such that all SFCs allowed to participate in the game are placed on the substrate network nodes.	
\end{theorem}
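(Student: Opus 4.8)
The plan is to treat Algorithm \ref{algo:Matching algorithm} as a generalized (many-to-one, size-aware) Gale--Shapley deferred-acceptance procedure and to establish the claim in two parts: first that the procedure halts with every participating SFC assigned to some node, and second that the resulting matching $\tau$ admits neither a blocking pair nor a blocking individual, i.e., it is stable. Together these yield the existence of at least one stable matching in which all SFCs are placed. Throughout I would rely on the already-stated feasibility assumption that the total residual capacity of $\mathcal{N}$ suffices to host $\mathcal{S}$ (exactly as enough total bin capacity is available in the bin-packing reduction used to establish NP-hardness), and on individual rationality so that only acceptable pairs are ever formed.

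For termination and completeness, I would first record that an SFC $s_j$ always proposes to the most preferred node still \emph{unmarked} in $pl(s_j)$ (line~5) and that a node is permanently marked on $pl(s_j)$ precisely when $n_i$ rejects $s_j$ for lack of feasible room (line~23). The central invariant I would prove is that the set of SFCs tentatively held at any node $n_i$ only \emph{improves} in $\succ_{n_i}$: a held SFC is evicted (line~16) only to admit a strictly more preferred one, so the least-preferred SFC held at $n_i$ never becomes worse once the node begins evicting. Consequently, once $n_i$ permanently rejects $s_j$ it never later admits an SFC that $n_i$ likes less than $s_j$, which bounds how often any SFC can be re-admitted and evicted at a given node. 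Because $\mathcal{S}$ and $\mathcal{N}$ are finite, only finitely many proposals occur and the while-loop terminates. Since the loop exits only when every still-unplaced SFC has marked all nodes in its list, and feasibility forbids an acceptable SFC from exhausting all nodes without placement, every SFC is placed at termination.

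For stability, I would argue by contradiction. Suppose $\tau$ contains a blocking pair $(s_j, n_i)$: $s_j$ strictly prefers $n_i$ to $\tau(s_j)$, and $n_i$ could accommodate $s_j$ either in residual capacity or by evicting SFCs it ranks below $s_j$ (the same test as lines~14--20). Since SFCs propose in strictly decreasing $\succ_{s_j}$-order and $s_j$ was finally matched to a node it prefers less than $n_i$, $s_j$ must have proposed to $n_i$ at some earlier iteration and been either permanently rejected or evicted there. By the monotone-improvement invariant, the SFCs held by $n_i$ at termination are all at least as preferred as those that blocked $s_j$'s admission and occupy enough capacity that $n_i$ can neither fit $s_j$ in residual space nor free enough room by evicting only SFCs less preferred than $s_j$ --- contradicting the assumption that $(s_j,n_i)$ blocks. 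A blocking individual is excluded because the algorithm forms only acceptable pairs and leaves an SFC unplaced only when no acceptable, feasible node remains; hence $\tau$ is stable.

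The main obstacle I anticipate is the termination/monotonicity step, not stability. Unlike classical deferred acceptance, the modified algorithm \emph{unmarks} a node on $pl(s_j)$ whenever $s_j$ is evicted (line~18), so SFCs may re-propose to nodes that previously held them; the usual ``each proposer contacts each receiver at most once'' bound therefore fails, and I cannot simply count proposals. The delicate part is to pin down the correct monotone quantity at each node and to reconcile the two behavioral regimes --- the ``accept any SFC that fits in residual capacity'' phase (line~6) versus the ``evict the least preferred to admit a better one'' phase (lines~14--20) --- while accounting for heterogeneous SFC sizes $c_{s_j}$, so that each re-admission provably raises the node's admission threshold and the number of eviction/re-proposal cycles at every node is finite.
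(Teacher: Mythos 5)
Your stability argument is essentially the one the paper itself gives: a blocking pair $(s_j,n_i)$ is ruled out by observing that $s_j$ proposes in decreasing preference order, so it must already have proposed to $n_i$ and been turned away, whence $n_i$ ``prefers what it holds.'' Where you add genuine value is in recognizing that the paper's proof says nothing about termination, and that the unmarking step of Algorithm~\ref{algo:Matching algorithm} (an evicted SFC re-proposes to the same node) destroys the classical ``each proposer contacts each receiver at most once'' bound. You are right that this is the delicate point, but your proposal leaves it open, and the invariant you intend to use is false as stated: the set of SFCs held at a node does \emph{not} monotonically improve in $\succ_{n_i}$, because after an eviction frees capacity the node will re-admit any SFC, however low-ranked, that fits in the residual (the $c_{n_i}\ge c_{s_j}$ branch). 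So the least-preferred held SFC can get worse again, and you still owe a correct potential function --- for instance one exploiting that every eviction event is triggered by the admission of a strictly more preferred SFC, so eviction/re-proposal cycles at a node cannot recur indefinitely.

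More seriously, heterogeneous sizes break the stability step itself, in your argument and in the paper's. A past rejection of $s_j$ by $n_i$ does not imply that $n_i$ cannot accommodate $s_j$ at termination: when $n_i$ later evicts a large, low-ranked SFC to admit a small, high-ranked one, its residual capacity can \emph{increase}, yet $s_j$ never returns because $n_i$ was permanently marked on $pl(s_j)$ at the capacity-based rejection. Concretely, take $c_n=10$, node ranking $s_1\succ_n s_2\succ_n s_3$ with sizes $4$, $9$, $3$, and all three ranking $n$ first: $s_2$ is admitted (residual $1$), $s_3$ is rejected for lack of room and marks $n$, then $s_1$ evicts $s_2$ and is admitted leaving residual $6$; $s_3$ ends up at a less preferred node although $n$ now has room for it, so $(s_3,n)$ blocks. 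Your closing claim that the surviving SFCs ``occupy enough capacity that $n_i$ can neither fit $s_j$ in residual space nor free enough room'' is precisely the step that fails, and the paper's corresponding sentence (``the preferred node has another SFC that it strictly prefers'') fails the same way. A correct proof must either weaken the stability notion being claimed, or change the marking rule so that capacity-based rejections are revisited when capacity is freed.
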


\begin{proof}
	We assume that substrate nodes have enough capacity to accommodate all the SFC requests \cite{Gale_Shapley}. SFCs on one side propose to the nodes based on their respective preferences.  Each node on the other side accepts all the proposals until its quota/residual capacity is over. Once the quota/residual capacity of the node is over, SFC requests are processed based on preference order of the node. An SFC request which has higher preference should be accepted, and to accommodate that the less preferred (already accepted) requests are rejected if there is not enough residual capacity on the node. A rejected SFC request is allowed to propose again to the same node till either the available resource in the node is not enough or it is less preferred than all the already accepted SFC requests. The same procedure is followed until all the SFC requests are placed in one of their preferred nodes. If an SFC would prefer to be matched to a node other than the assigned node, then according to our  algorithm, the SFC must have already proposed to  the preferred node and the preferred node must have rejected the SFC due to either lack of the resources or the SFC being less preferred than the already accepted SFCs. It means that the preferred node has another SFC that it strictly prefers, hence there cannot be a blocking pair. Therefore, our modified many-to-one matching algorithm which follows deferred acceptance procedure produces a stable matching result.    	
\end{proof}

\begin{theorem}
	Algorithm \ref{algo:Matching algorithm}, which follows deferred acceptance procedure, produces not only a stable but an optimal assignment of SFC requests onto substrate nodes. 
\end{theorem}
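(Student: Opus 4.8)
The plan is to prove SFC-optimality, which strengthens the stability guaranteed by Theorem 2. First I would make precise what \emph{optimal} means here: an SFC-optimal stable matching is one in which every SFC $s_j \in \mathcal{S}$ is assigned to the node it most prefers among all nodes that are \emph{achievable} for it, where a node $n_i$ is called achievable for $s_j$ if there exists at least one stable matching $\tau'$ with $s_j \in \tau'(n_i)$. Since Theorem 2 already establishes that Algorithm \ref{algo:Matching algorithm} terminates with a stable matching $\tau$, it remains only to show that no SFC can be assigned a strictly more preferred node in any competing stable matching.

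The crux is the following lemma, which I would prove by induction on the iteration count of the deferred-acceptance loop: \emph{no SFC is ever rejected by a node that is achievable for it}. The base case is vacuous since no rejections have yet occurred. For the inductive step, suppose toward a contradiction that the current iteration is the first in which some SFC $s_j$ is rejected by an achievable node $n_i$, and let $\tau'$ be a stable matching witnessing achievability, so that $s_j \in \tau'(n_i)$. The rejection means that, after accounting for residual plus reclaimable capacity, $n_i$ retains a set of SFCs each of which it strictly prefers to $s_j$ (that is, $s'_j \succ_{n_i} s_j$ for every surviving $s'_j$). Each such surviving $s'_j$ currently holds $n_i$, so by the rule that SFCs propose in strictly decreasing order of preference, $n_i$ is the best node $s'_j$ has not yet been rejected by; and by the induction hypothesis $s'_j$ has not been rejected by any achievable node, in particular not by its own $\tau'$-partner $\tau'(s'_j)$, whence $s'_j$ strictly prefers $n_i$ to $\tau'(s'_j)$. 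I would then select a surviving $s'_j$ that $\tau'$ does \emph{not} already assign to $n_i$; for this SFC the pair $(s'_j, n_i)$ blocks $\tau'$, because $s'_j \succ_{n_i} s_j$ with $s_j \in \tau'(n_i)$ and $s'_j$ prefers $n_i$ to $\tau'(s'_j)$, contradicting the stability of $\tau'$.

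From the lemma the theorem follows directly: in Algorithm \ref{algo:Matching algorithm} each SFC proposes to nodes in strictly decreasing order of preference and, by the lemma, is never turned away by a node achievable for it. Hence at termination every $s_j$ is matched to the most preferred node in its achievable set, which is precisely the definition of an SFC-optimal stable matching; combined with Theorem 2 this yields an assignment that is both stable and optimal.

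I expect the principal difficulty to lie in adapting the classical one-to-one Gale--Shapley optimality argument to the present \emph{many-to-one} setting with heterogeneous SFC resource demands $c_{s_j}$ and the reclamation step of Algorithm \ref{algo:Matching algorithm}. Unlike the unit-demand case, a single rejection here may reclaim capacity from several already-accepted SFCs, so the step that extracts a blocking pair for $\tau'$ must carefully guarantee the existence of a surviving SFC at $n_i$ that both strictly prefers $n_i$ to its $\tau'$-partner and is strictly preferred by $n_i$ over an SFC that $\tau'$ assigns to $n_i$. Making this counting argument rigorous under the capacity feasibility condition $\big(\sum_{s'_j} c_{s'_j}\big) + c_{n_i} \ge c_{s_j}$, and confirming that the notion of a blocking pair remains meaningful when demands differ across SFCs, is the main obstacle.
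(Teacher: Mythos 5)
Your proposal is correct and follows the same underlying route as the paper --- the classical Gale--Shapley proposer-optimality argument --- but it is substantially more rigorous than what the paper actually writes. The paper's entire proof rests on the unproven assertion that the procedure ``only rejects requests which could not be accommodated in any stable assignment''; that assertion is exactly your inductive lemma (no SFC is ever rejected by an achievable node), which the paper states as a conclusion rather than establishing. Your induction is the missing content, and the ``main obstacle'' you flag at the end is in fact closable: whenever $n_i$ rejects $s_j$ in Algorithm~\ref{algo:Matching algorithm}, the set $S'$ of surviving SFCs at $n_i$ satisfies $\sum_{s'\in S'} c_{s'} + c_{s_j} > c_{n_i}$ (this is precisely the failure of the residual-plus-reclaimed capacity test), so $S'\cup\{s_j\}$ cannot all be assigned to $n_i$ in the capacity-feasible matching $\tau'$, and a surviving $s'\notin\tau'(n_i)$ must exist; under the paper's (capacity-free) definition of a blocking pair, $(s',n_i)$ then blocks $\tau'$ as you claim. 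One small point to tidy up: because the modified algorithm lets an SFC that was bumped re-propose to the same node, your lemma should be phrased in terms of \emph{definitive} rejections (the node being marked on $pl(s_j)$), since a bumped SFC may later be re-accepted by the same node; with that reading, the invariant that a currently held node is the best not-yet-definitively-rejected node still holds and your argument goes through. It is also worth noting that both your proof and the paper's implicitly assume a stable matching exists in this many-to-one setting with heterogeneous demands $c_{s_j}$ (where existence can fail in general); Theorem~2's blanket capacity assumption is doing real work there.
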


\begin{proof}
	In the process of deferred acceptance matching procedure, no higher ranked SFC request is rejected by any node in order to accept a lower ranked SFC request. If a higher ranked SFC is rejected due to insufficiency of resources, then the lower ranked SFCs can be accommodated for maximum utilization of resources. The node does not prefer the rejected higher ranked SFC request to the accepted lower ranked SFC request.   This procedure only rejects requests which could not be accommodated in any stable assignment. Therefore, our modified algorithm which follows deferred acceptance procedure not only yields a stable but an optimal assignment.    
\end{proof}

From theorems 2 and 3, we conclude that the modified matching algorithm produces stable and optimal assignment of SFC requests onto substrate nodes based on the preference lists. In matching theory, optimal assignment means there is no better matching/assignment than the current one. Note that optimal assignment in matching theory is different from optimal solution to the ILP (SFC placement) problem. The time complexities of  preference list preparation for SFCs  and nodes are $O(\mathcal{S} log \mathcal{S})$ and $O(\mathcal{N} log \mathcal{N})$, respectively. In the worst case, $O(\mathcal{S}^2)$ proposals are executed in each node $n \in \mathcal{N}$ in Algorithm~\ref{algo:Matching algorithm}. Hence, the time complexity of Algorithm~\ref{algo:Matching algorithm} is $O(\mathcal{N}\mathcal{S}^2)$. 

\section{Performance Evaluation}
In this section, we evaluate the performance of our proposed solutions presented in the previous sections.

\subsection{Performance Analysis of Subchaining an SFC to Enhance the Reliability}
We analyse the performance of SFC subchaining method proposed in section \ref{SFC_design} to enhance the reliability of an SFC. Simulation parameters considered in our simulation are shown in Table \ref{tab:2}. Simulation parameters are chosen such that the system is stable. For the system to be stable, traffic intensity $\rho = \lambda_s / \mu_v$ should be less than one \cite{Raj_Jain}. Reliability rate parameters $p_v$ and $p_n$ are taken from \cite{Qu1_2018} \cite{SLA} and SFC size parameter is taken from \cite{power_aware}. Simulation results are obtained using discrete-event simulator MATLAB Simulink. We compare our proposed M/M/1 and M/M/m settings with i) service chain backup setting where there is a dedicated backup for every VNF in an SFC (SCB1) and ii) service chain backup setting where a separate SFC chain is assigned as backup for a primary SFC (SCB2). We compare our results in terms of reliability, expected response time, and amount of resources required for an SFC.

\begin{table}[]
	\begin{center}
		\footnotesize
		\caption{Simulation parameters for subchaining}
		\label{tab:2}
		\begin{tabular}{|l|l|}
			\hline 
			Parameters & Values \\
			\hline
			Arrival rate, $\lambda_s$ & 100 \\
			Serving rate of VNFs, $\mu_v$ & 200 \\
			Size of SFC, $|\mathcal{V}_s|$ & 5 \\
			Reliability rate of VNFs,  $p_{v}$ & 0.9 \\
			Reliability rate of substrate nodes, $p_n$ & 0.999  \\
			\hline
		\end{tabular}
	\end{center}
\end{table}

\begin{figure*}[]
	\centering
	\begin{subfigure}{0.45\textwidth}
		\includegraphics[width=7.5cm,height=4cm]{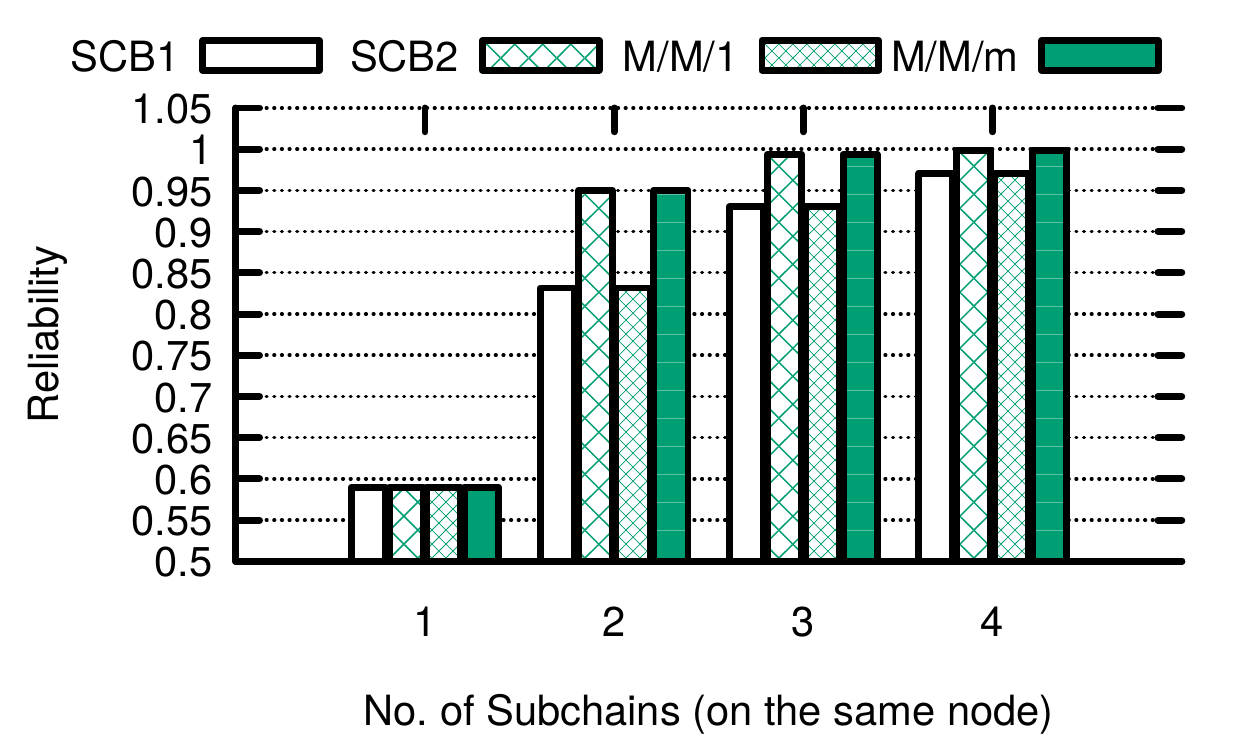}
		\caption{No. of subchains vs. Reliability}
		\label{fig:subChainsVsRel1}
	\end{subfigure} 
	\begin{subfigure}{0.45\textwidth}
		\includegraphics[width=7.5cm,height=4cm]{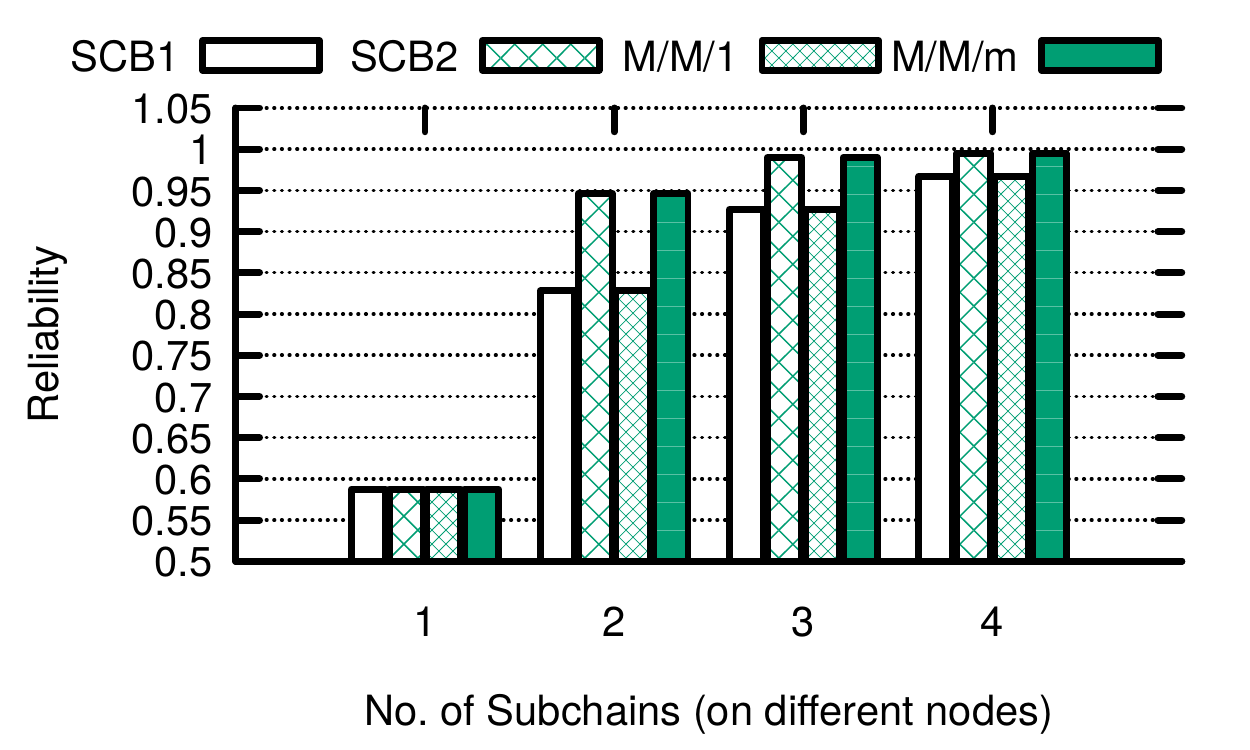}
		\caption{No. of subchains vs. Reliability}
		\label{fig:subChainsVsRel2}
	\end{subfigure}
	
	\begin{subfigure}{0.45\textwidth}
		\includegraphics[width=7.5cm,height=4cm]{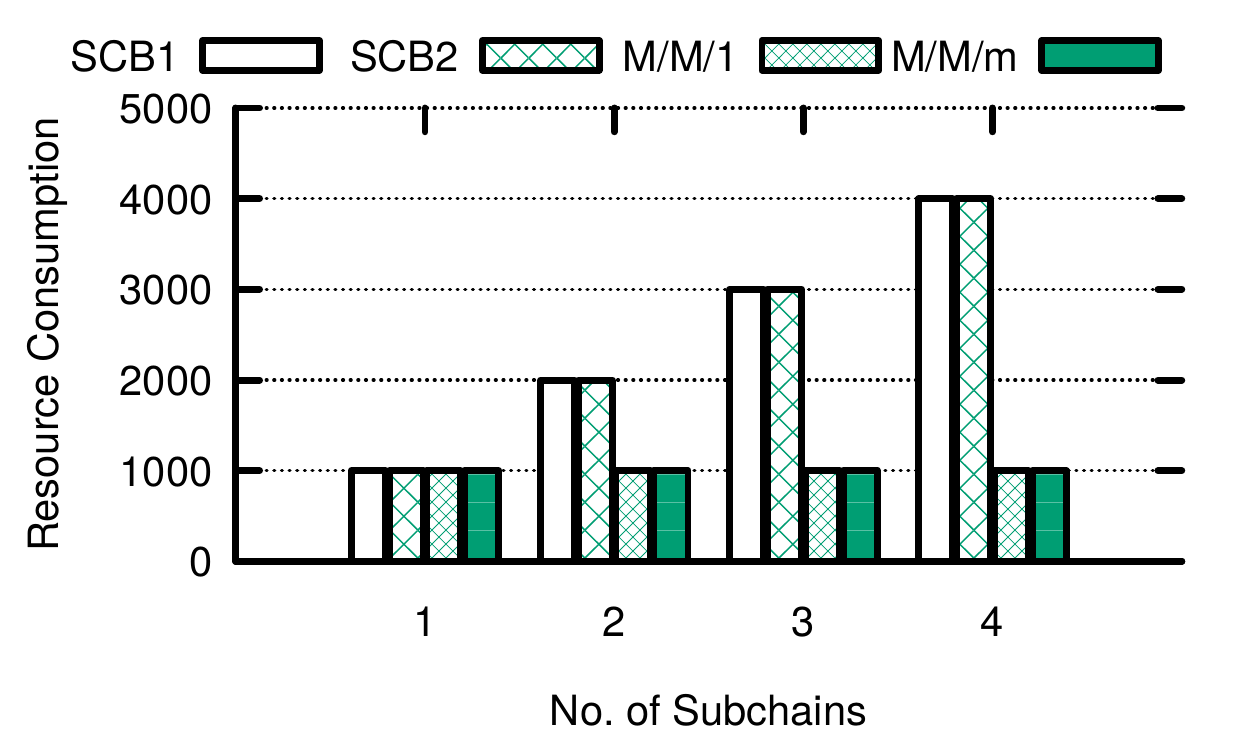}
		\caption{No. of subchains vs. Resource consumption}
		\label{fig:subChainsVsResource}
	\end{subfigure}
	\begin{subfigure}{0.45\textwidth}
		\includegraphics[width=7.5cm,height=4cm]{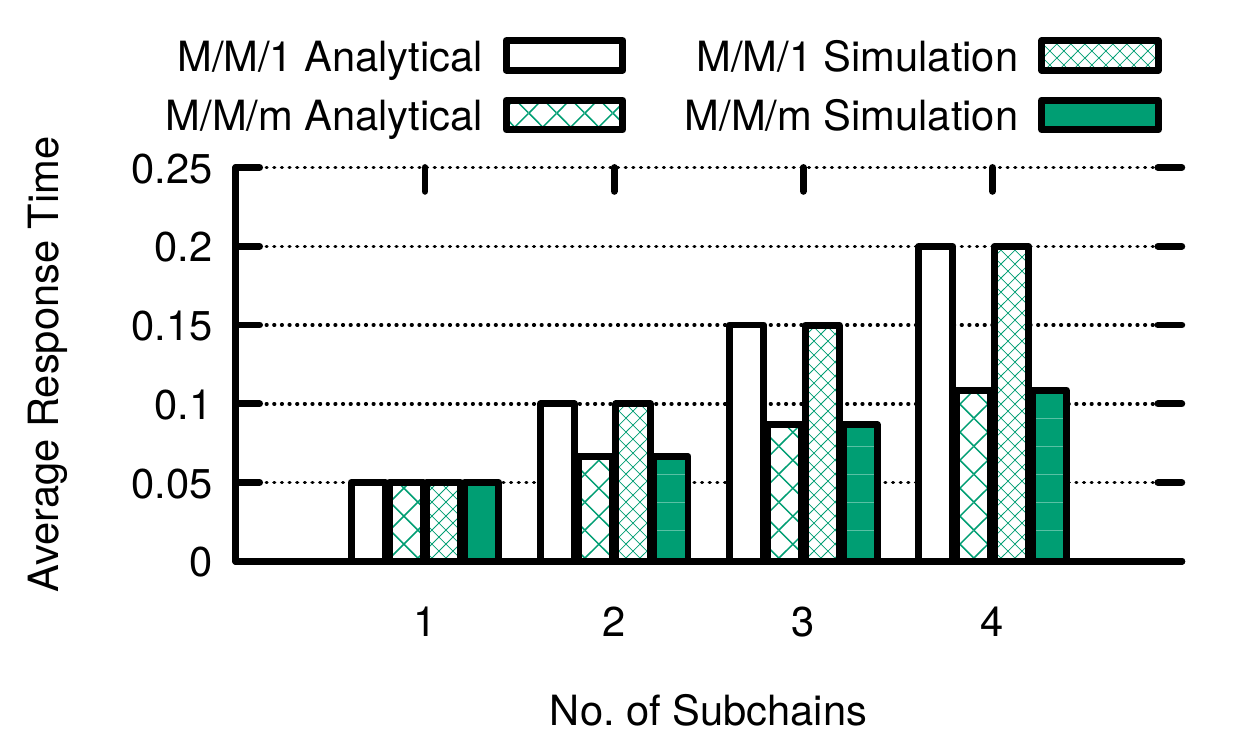}
		\caption{No. of subchains vs. Average response time (in sec)}
		\label{fig:delayVsSubchains}
	\end{subfigure}
	\caption{Subchaining analytical and simulation results.}
	
\end{figure*} 

Figures \ref{fig:subChainsVsRel1} and \ref{fig:subChainsVsRel2} show the effect of reliability on number of subchains placed on the same node and different substrate nodes, respectively. The reliability difference on placing VNFs of an SFC on the same node and on different nodes is very less. Placing subchains on different nodes provides higher robustness than placing them on the same node, however it comes with additional resource costs (a trade-off). It is clear that M/M/m setting provides higher reliability than that of M/M/1 setting. Since M/M/1 and M/M/m settings are chained based on the two actual backup settings (dedicated VNFs and separate SFC chains), they also provide the same reliability rate which is equal to SCB1 and SCB2, respectively. However, SCB1 and SCB2 methods consume more additional resources (with respect to serving rate of VNF) as we increase the number of subchains, which is shown in Figure \ref{fig:subChainsVsResource}. In the subchaining methods, we consider number of virtual cores (directly relates to $\mu_v$), assigned to all VNFs in $\mathcal{V}_s$ to process the traffic of requested service, as the resources. Since the primary SFC chain is divided into lesser capacity subchains, M/M/1 and M/M/m settings consume almost the same number of resources as a single chain primary SFC. Figure \ref{fig:delayVsSubchains} shows the effect of average response time (in seconds) on number of subchains. Simulation results obtained using MATLAB Simulink are compared with the analytical results. It is clear that M/M/m setting has less average response time than that of M/M/1 setting, and the average response time increases linearly as we increase the number of subchains in both the settings.     

We compare the performance of SFC subchaining methods with the related work \cite{Chantre1_2018}. Two redundancy models are proposed to improve the reliability of NFV-based 5G networks \cite{Chantre1_2018}: i) series-parallel backup model, in which for series of VNFs in service chain backups are assigned in parallel for each VNF (SPB) and ii) parallel-series backup model, in which an entire VNF chains are assigned as backup in parallel for series of VNFs in service chain (PSB). We consider unit cost VNFs and backup VNFs are operationally synchronized with primary VNFs and be ready to replace primary VNFs in case of failure \cite{Chantre1_2018}. Total cost to construct parallel backups for an SFC with size five are compared in Figure \ref{fig:SFC_comparison}a and it can be seen from this figure that the total SFC construction cost increases for SPB and PSB models as we increase the number of parallel backups. Since the same VNFs of an SFC is divided into lesser capacity VNFs to construct parallel backpus, the construction cost is less for M/M/1 and M/M/m based subchaining models. Mean delay of different backup models is compared in \ref{fig:SFC_comparison}b. As dedicated individual backups are assigned for SPB and PSB models, and the processing capacity of VNFs is same for both primary and backup VNFs, the mean delay is same irrespective of the number of parallel backups. Since the processing capacity is equally shared with backup VNFs, mean delay for M/M/1 and M/M/m models increases as we increase the number of parallel backups for an SFC.        

\begin{figure}[h!]
	\centering
	\begin{subfigure}{0.45\textwidth}
		\includegraphics[width=7.5cm,height=4cm]{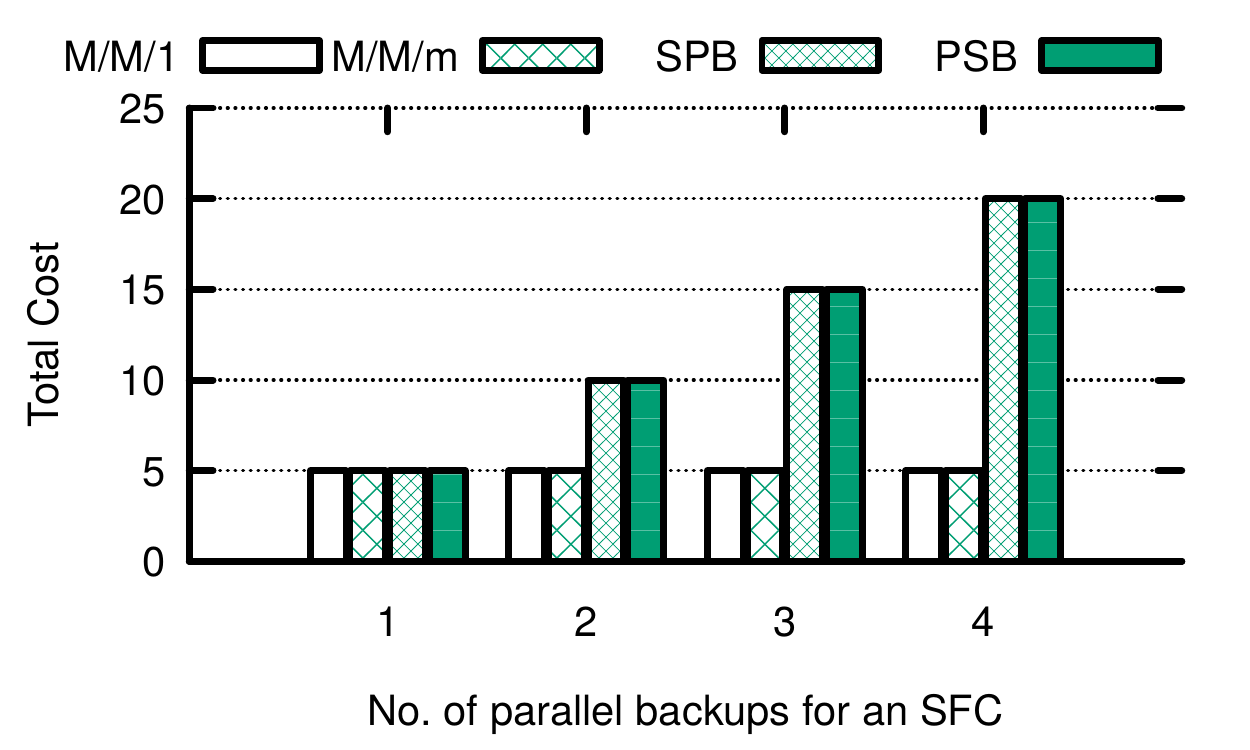}
		\caption{No. of backups vs. Total cost}
		\label{fig:subChainsVsCost}
	\end{subfigure} 
	\begin{subfigure}{0.45\textwidth}
		\includegraphics[width=7.5cm,height=4cm]{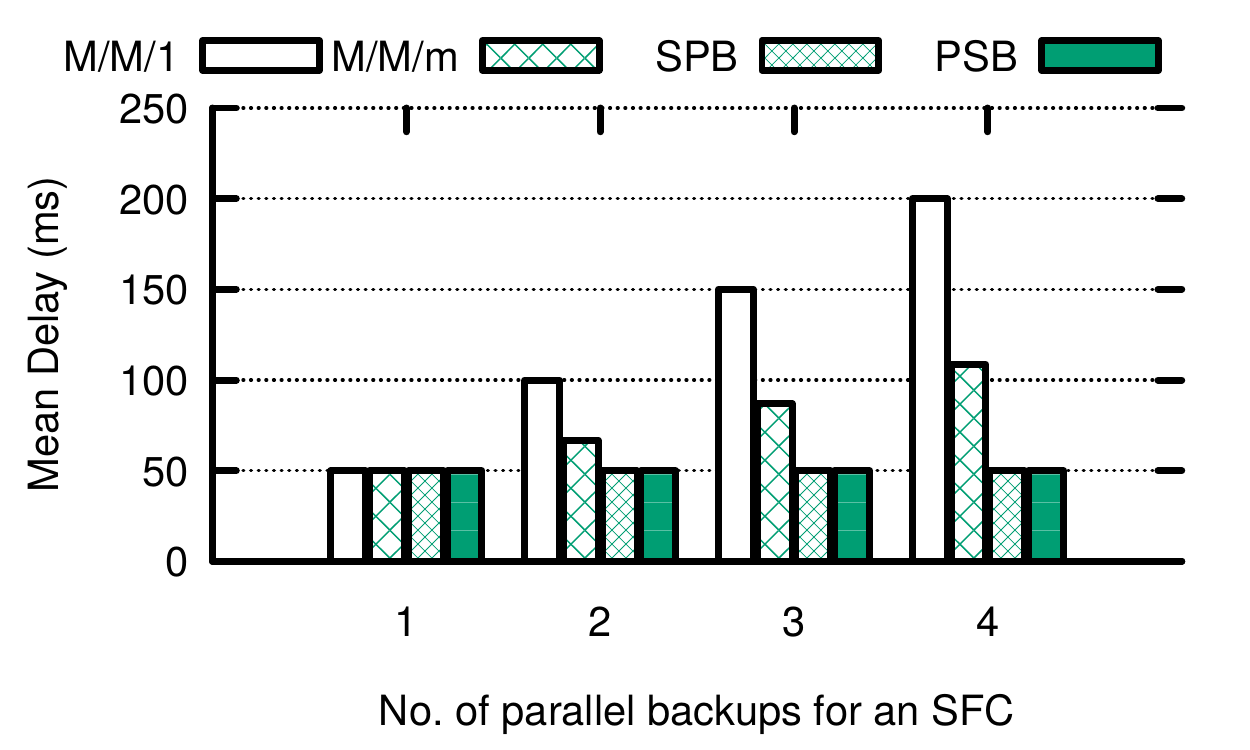}
		\caption{No. of backups vs. Mean delay}
		\label{fig:subChainsVsDelay}
	\end{subfigure}
	\caption{Performance analysis of SFC subchaining.}
	\label{fig:SFC_comparison}
\end{figure}

\begin{table*}[t]
	\begin{center}
		\footnotesize
		\caption{SFC requirements \cite{Savi}}
		\label{tab:3}
		\begin{tabular}{|l|l|c|c|c|c|}
			\hline 
			Service Types & Ordered VNFs of SFCs & Bandwidth & Delay ($\Psi_s$) & Reliability ($\Delta_s$) & Traffic\\
			\hline
			1. Web service & NAT-FW-TM-WOC-IDPS & 100 kbps & 500 ms & 0.90 & 18.2$\%$ \\
			2. Voice over IP & NAT-FW-TM-FW-NAT & 64 kbps & 100 ms & 0.999 & 11.8$\%$ \\
			3. Video streaming & NAT-FW-TM-VOC-IDPS & 4 Mbps & 100 ms & 0.99 & 69.9$\%$ \\
			4. Online gaming & NAT-FW-VOC-WOC-IDPS & 50 kbps & 70 ms & 0.99 & 0.1$\%$ \\
			\hline
		\end{tabular}
	\end{center}
\end{table*}

\subsection{Performance Analysis of Reliable SFC Design}
We analyze the performance of reliable SFC design proposed in section \ref{SFC_design}. For the evaluation purpose, we consider four service types which are \cite{power_aware} \cite{Savi}: web service, Voice over IP, video streaming, and online gaming. For each service type request, ordered list of VNFs, bandwidth, delay, reliability, and traffic generation percentage requirements are given in Table \ref{tab:3}. Six different VNFs are considered to construct SFCs, which are Network Address Translator (NAT), Firewall (FW), Traffic Monitor (TM), WAN Optimization Controller, Video Optimization Controller (VOC), and Intrusion Detection and Prevention System (IDPS). Service requests are generated based on the traffic generation percentage. For instance video streaming service has traffic generation percentage of 69.9$\%$, which is the highest.  

Each service type request has different delay $\Psi_s$ and reliability $\Delta_s$ requirements as shown in Table \ref{tab:3}. For each service type request, we assign reliability requirement similar to G Suite SLA assigned for various Google applications \cite{SLA}. For given requirements of a service request, we design a service chain based on the Algorithms \ref{algo:alg1} and \ref{algo:alg2} such that  the delay and reliability requirements are met with minimal additional backup resources without violating SLAs. Algorithm \ref{algo:alg1} uses the subchaining procedure to meet the reliability of an SFC without assigning any dedicated backup resources. Table \ref{tab:4} shows the results of SFC subchaining procedure for both M/M/1 and M/M/m settings. From the results, it is clear that SFC subchaining enhances the reliability as well as increases the processing delay. It is observed that making two subchains is enough to meet the requirements (both delay and reliability) of service type 1 web service ($\Psi_s$ = 500 ms, $\Delta_s$ = 0.90) in M/M/m setting, hence no backup is required. For other service types, we can make the maximum of only three subchains because making more than three subchains violates the delay constraint. In M/M/1 setting, making three subchains meets the reliability requirement of service type 1 (web service). For other service types, it is possible to make only two subchains without violating the delay constraints. In both M/M/1 and M/M/m settings, if reliability requirement is not met after subchaining, then backups are added in an efficient manner to meet the SLAs. 

\begin{table}[]
	\begin{center}
		\footnotesize
		\caption{SFC request subchaining results based on Algorithm \ref{algo:alg1}}
		\label{tab:4}
		
		\begin{tabular}{|p{0.01cm}|p{0.82cm}|p{1.0cm}|p{1.04cm}|p{1.03cm}|p{1.0cm}|p{1.04cm}|}
			\hline
			\multirow{2}{*}{$l$} & \multicolumn{3}{c|}{M/M/1 setting} & \multicolumn{3}{c|}{M/M/m setting} \\ \cline{2-7} 
			& Delay          & Reliability  & vCPUs        & Delay          & Reliability   & vCPUs     \\ \hline
			1 & 50 ms & 0.5899 & 5$\times$4$\times$1=20 & 50 ms & 0.5899 & 5$\times$4$\times$1=20 \\ 
			2 & 100 ms & 0.8315 & 5$\times$2$\times$2=20 & 66.7 ms & 0.9500 & 5$\times$2$\times$2=20 \\
			3 & 150 ms & 0.9304 & 5$\times$2$\times$3=30 & 86.8 ms & 0.9940 & 5$\times$2$\times$3=30 \\
			4 & 200 ms & 0.9709 & 5$\times$1$\times$4=20 & 108.7 ms& 0.9985 & 5$\times$1$\times$4=20 \\
			\hline
		\end{tabular}
	\end{center} 
\end{table}

According to \cite{power_aware}, each VNF of an SFC requires 4 vCPUs to perform an operation. If we divide the SFC into multiple subchains $l$, then each VNF in the subchain requires $\lceil{\frac{4}{l}}\rceil$ vCPUs to perform the operation. Since each service request has five ordered VNFs on a chain, the total amount of resources (vCPUs) required to place the entire chain after dividing into multiple subchains $l$ is calculated as 5$\times\lceil{\frac{4}{l}}\rceil\times l$, which is shown in the Table \ref{tab:4}.     

\begin{table*}[]
	\begin{center}
		\footnotesize
		\caption{SFC request reliability guaranteeing results based on Algorithm \ref{algo:alg2}}
		\label{tab:5}
		
		\begin{tabular}{|p{1.75cm}|c|c|c|l|c|c|c|l|}
			\hline
			\multirow{2}{*}{Service types} & \multicolumn{4}{c|}{M/M/1 setting} & \multicolumn{4}{c|}{M/M/m setting}  \\ \cline{2-9} 
			& $l$ & \#backups         & Reliability  & vCPUs   & $l$ & \#backups         & Reliability   & vCPUs   \\ \hline
			Web service & 3 & 0 & 0.9300 & 5$\times$2$\times$3 + 0 = 30  & 2 & 0 & 0.9500 & 5$\times$2$\times$2 + 0 = 20 \\ 			Voice over IP & 2 & 15 & 0.9983 & 5$\times$2$\times$2 + 15$\times$2 = 50 & 3 & 5 & 0.9990 & 5$\times$2$\times$3 + 5$\times$2 = 40 \\ 
			Video stream & 2 & 9 & 0.9924 & 5$\times$2$\times$2 + 9$\times$2 = 38 & 3 & 0 & 0.9940 & 5$\times$2$\times$3 + 0 = 30 \\ 
			Online gaming & 1 & 10 & 0.9940 & 5$\times$4$\times$1 + 10$\times$4 = 60 & 2 & 5 & 0.9940 & 5$\times$2$\times$2 + 5$\times$2 = 30 \\ \hline 
		\end{tabular}
	\end{center} 
\end{table*}

\begin{table}[h!]
	\footnotesize	
	\centering
	\caption{Average running times of Algorithms 1 and 2 in seconds}
	\label{tab: running_time}
	\begin{tabular}{|c|c|c|c|c|}
		\hline
		\multirow{2}{*}{\begin{tabular}[c]{@{}c@{}}\# of service \\  requests\end{tabular}} & \multicolumn{2}{c|}{Algorithm 1} & \multicolumn{2}{c|}{Algorithm 2} \\ \cline{2-5} 
		& M/M/1           & M/M/m          & M/M/1           & M/M/m          \\ \hline
		25 & 0.28 & 0.18 & 0.62 & 0.58 \\ \hline
		50 & 0.39 & 0.31 & 0.97 & 0.89 \\ \hline
		100 & 0.62 & 0.51 & 1.59 & 1.49 \\ \hline
		200 & 0.93 & 0.98 & 2.85 & 2.72 \\ \hline
		300 & 1.32 & 1.64 & 4.14 & 3.96 \\ \hline
		400 & 1.64 & 2.43 & 5.45 & 5.27 \\ \hline
		500 & 1.87 & 3.36 & 6.72 & 6.52 \\ \hline
	\end{tabular}
\end{table}

Although subchaining enhances the reliability of SFCs, in some cases it may not meet the reliability requirement of service requests. For instance, VoIP service request reliability requirement (required is 0.999) is not met by the subchaining procedure (obtained is 0.995). It is because of delay constraint. We proposed a novel way of satisfying reliability requirements of service requests with minimal additional redundant resources using SFC subchaining and incremental backups in Algorithms \ref{algo:alg1} and \ref{algo:alg2}, respectively. From the results shown in Table \ref{tab:5}, it is observed that to meet the SLAs (particularly reliability requirement) some service requests do not require backups while other service requests require backups. For instance, web service ($\Delta_s$ = 0.9) and video streaming ($\Delta_s$ = 0.99) service types do not require backups to meet the reliability requirements in M/M/m setting. In M/M/1 setting, web service request does not require backup but it needs three subchains $l$ (one more than the M/M/m setting) while the video streaming service type requires nine redundant backups (no backups are required in M/M/m setting) to meet the expected reliability requirement. As we can see in Table \ref{tab:5}, a maximum of fifteen and five redundant backups are required for Voice over IP service type to meet the reliability requirement in M/M/1 and M/M/m settings, respectively.  Total number of resources (vCPUs) required to meet the SLAs is shown in Table \ref{tab:5}, which is calculated as resources required after subchaining plus the resources required for redundant backups. Since web service request does not require backups, the total amount of resources required to guarantee the reliability is 30 vCPUs and 20 vCPUs in M/M/1 and M/M/m settings, respectively.  The running times of Algorithms 1 and 2 are given in Table \ref{tab: running_time}. As shown in the table, both the algorithms terminated in a few seconds, i.e., in polynomial time.

\begin{table}[]
	\begin{center}
		\footnotesize
		\caption{SFC request reliability guaranteeing results based on REACH \cite{Qu1_2018}}
		\label{tab:6}
		
		\begin{tabular}{|l|c|c|l|}
			\hline
			\multirow{2}{*}{Service types} & \multicolumn{3}{c|}{REACH \cite{Qu1_2018}} \\ \cline{2-4} 
			&  No. of backups         & Reliability  & vCPUs     \\ \hline
			1. Web service & 5 & 0.9500 & 5$\times$4 + 5$\times$4 = 40  \\ 
			2. Voice over IP & 15 & 0.9990 & 5$\times$4 + 15$\times$4 = 80 \\
			3. Video streaming & 10 & 0.9940 & 5$\times$4 + 10$\times$4 = 60  \\
			4. Online gaming & 10 & 0.9940 & 5$\times$4 + 10$\times$4 = 60 \\
			\hline
		\end{tabular}
	\end{center} 
\end{table}

\begin{figure}[]
	\centering
	\begin{subfigure}{0.45\textwidth}
		\includegraphics[width=7.5cm, height=4cm]{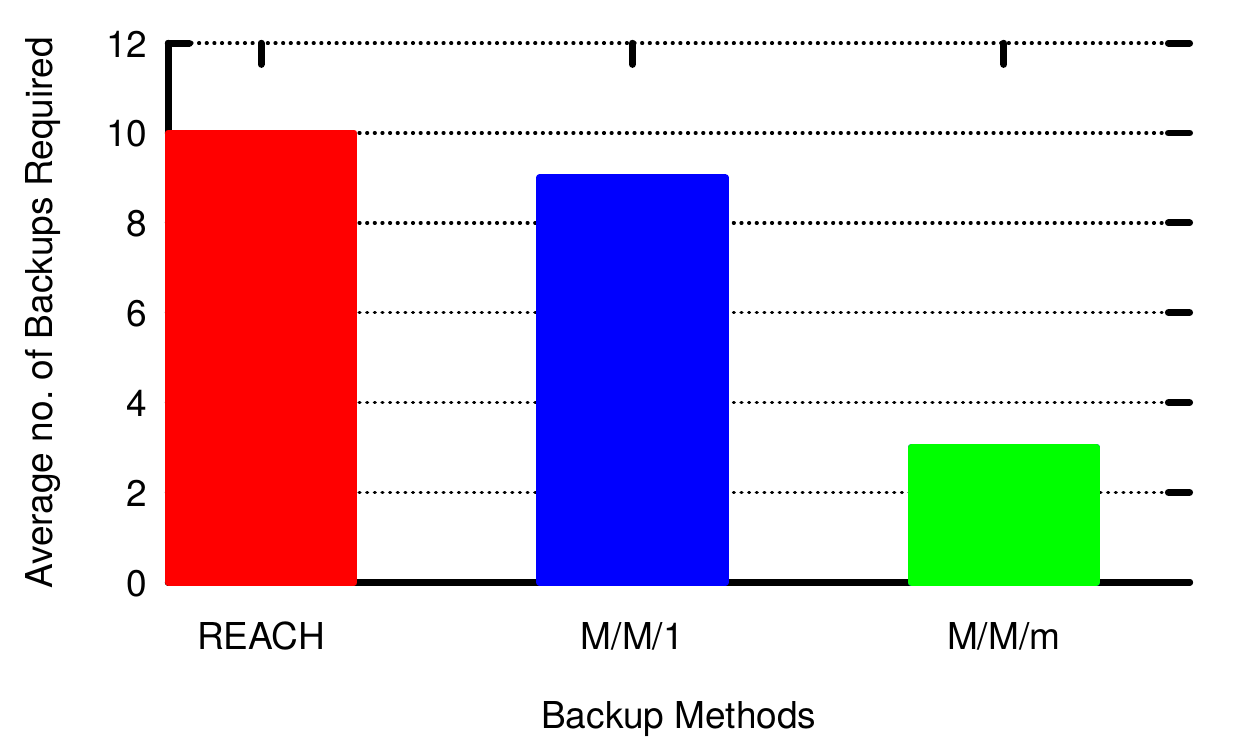}
		\caption{Average no. of backups required to satisfy the reliability requirement}
	\end{subfigure} 
	\begin{subfigure}{0.45\textwidth}
		\includegraphics [width=7.5cm, height=4cm]{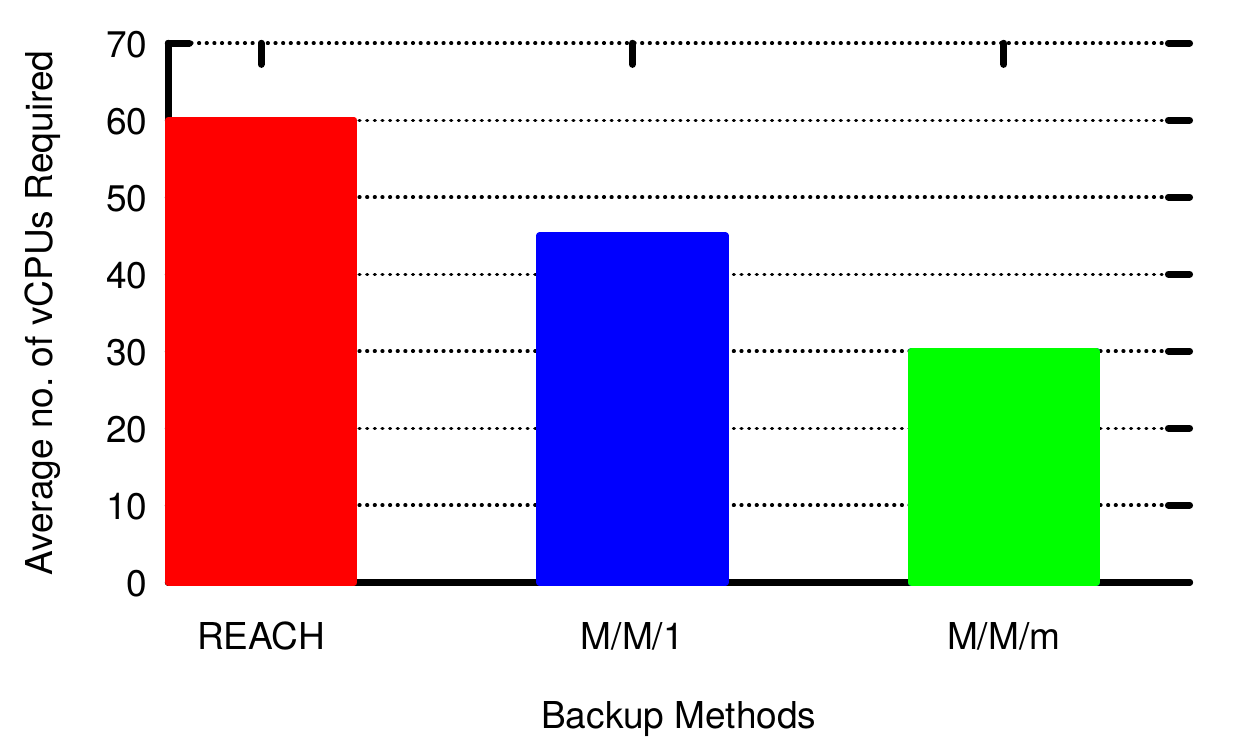}
		\caption{Average no. of vCPUs required for reliable SFC placement}
	\end{subfigure} 
	\caption{Performance comparison with REACH \cite{Qu1_2018}.}
	\label{Fig:com}
\end{figure}

For the same setup, we calculate the number of backups required to meet the reliability requirement of service requests and the number of resources required to place the primary and backup network functions based on the solution proposed in a state-of-the-art related work REACH \cite{Qu1_2018}. The results are shown in Table \ref{tab:6}. From Figure \ref{Fig:com}, it is clear that our proposed solutions perform better than REACH \cite{Qu1_2018} in terms of number of backups and resources required for reliable placement. Since we divide the SFC into multiple subchains of lesser capacity, the resource required for backups in our design is much lesser than the resource requirement of original undivided network functions. Therefore, compared to REACH \cite{Qu1_2018} our proposed solution consumes 25$\%$ to 50$\%$ lesser redundant resources for M/M/1 and M/M/m settings, respectively.  

\subsection{Performance Analysis of Placement of Reliable SFCs}
We analyse the performance of reliable SFC design placement proposed in section \ref{placement}. Since M/M/m setting performs better than M/M/1 setting as shown in Tables \ref{tab:4} and \ref{tab:5}, we consider only the placement of subchains of M/M/m setting along with backups based on the design results of Algorithms \ref{algo:alg1} and \ref{algo:alg2}. If a VNF is shared to create multiple service chains to serve multiple requests, then a failure of one common VNF may bring down all the chains. To avoid disruption of multiple services due to a single network element failure and to enhance the reliability, an individual SFC chain is created for each service request. Note that we have assumed that the primary VNFs and their corresponding backup VNFs of an SFC chain are placed on a single substrate node to easily activate the backups and synchronize the primary VNFs operations with backup VNFs, and to reduce energy consumption, bandwidth consumption, and inter VNF communication delays.  

We consider that each substrate node has the resource capacity of 28 CPU cores and it is enabled by hyper-threading \cite{Server}. Therefore, 56 vCPUs are available at each substrate node to host the virtual nodes. We assume that the underlying substrate network has 400 nodes to accommodate multiple SFCs \cite{Knight}. We consider four types of service requests as shown in Table \ref{tab:3}. Resources required for each service request to meet the SLAs are shown in Table \ref{tab:5}. For an M/M/m setting, the minimum and maximum vCPUs required for different service types are 20 and 40, respectively. To have various vCPU requirement size, we randomly generate vCPU requirement between 20 and 40 for each service request. 

\begin{figure}[]
	\centering
	\begin{subfigure}{0.45\textwidth}
		\includegraphics[width=7.5cm,height=4cm]{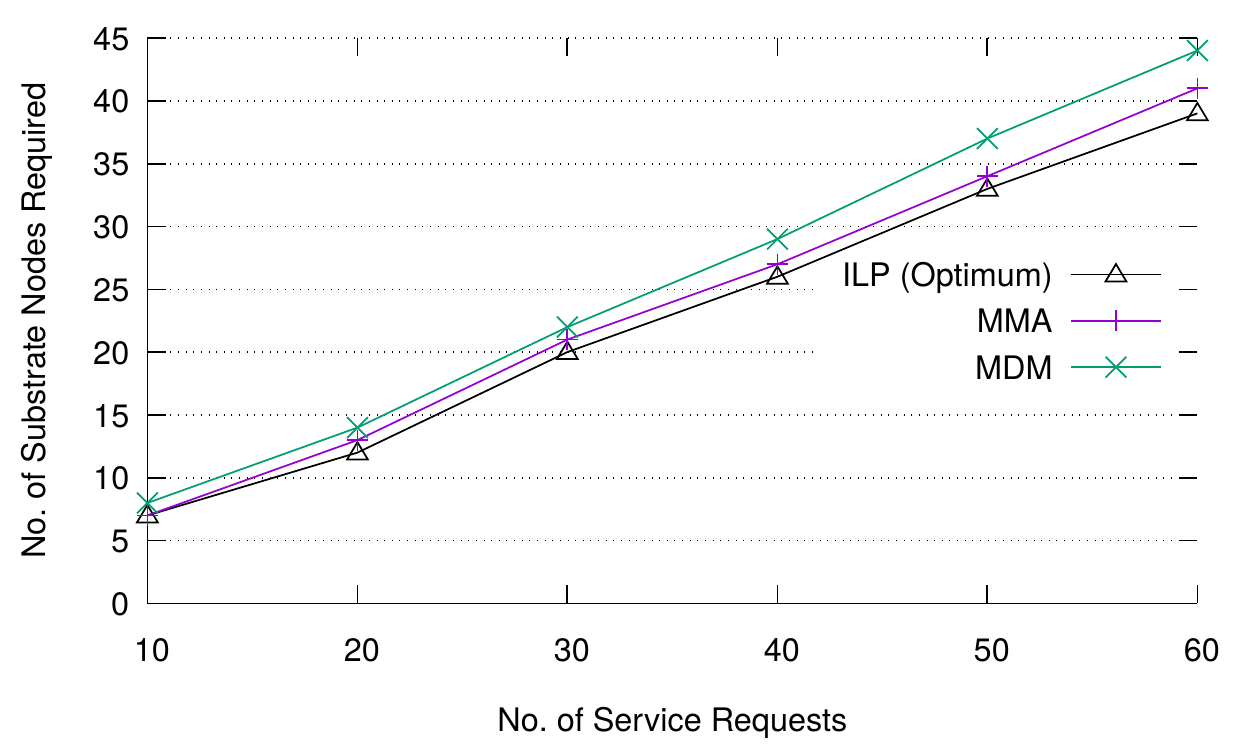}
		\caption{No. of substrate nodes required for smaller service requests}
	\end{subfigure} 
	\begin{subfigure}{0.45\textwidth}
		\includegraphics[width=7.5cm,height=4cm]{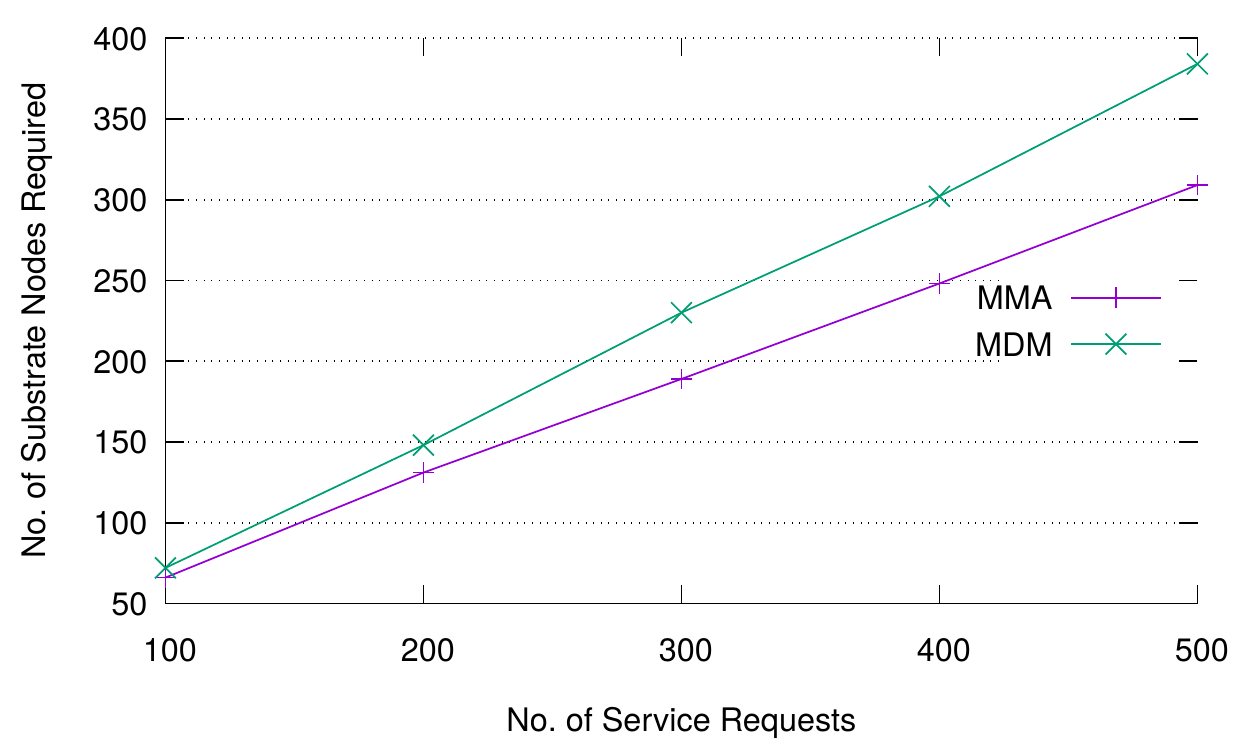}
		\caption{No. of substrate nodes required for larger service requests}
	\end{subfigure}
	\caption{Performance analysis of reliable SFC placement algorithms.}
	\label{fig:Objective_comparison}
\end{figure}

We use JuMP \cite{JuMP} for modeling ILP problem and Gurobi as solver to solve the ILP optimization problem. We compare the performance of our modified matching algorithm (MMA) with ILP (optimum), and multi-dimension matching algorithm (MDM) \cite{Pham_2018}. The comparison is done in terms of objective value (i.e., number of substrate nodes required for provisioning services) and running time. As shown in Figure \ref{fig:Objective_comparison}a, it is clear that MMA provides near-optimal solution which is closer to the optimal solution of ILP. As MMA allows to propose the SFCs as long as the resource is available on the nodes and utilizes the available resources of activated nodes efficiently, it performs better than MDM. Figure \ref{fig:Objective_comparison}b shows the results for large service requests, in which MMA is compared with MDM. The results show that MMA performs better than MDM and the performance gap increases with increasing number of service requests. We compare the running time of ILP and matching-based placement algorithms in Table \ref{tab:7}. It can be seen that as we increase the number of service requests in the network, the running time increases exponentially for the ILP model. On the other hand, matching algorithm based solutions take only a few seconds, which is significantly much lesser than the ILP. It can be seen that ILP provides optimal solution in reasonable time for small input instances. However, ILP takes longer time to converge for large input instances and hence not viable for practical deployment. Owing to high time complexity of the ILP, we designed many-to-one matching algorithm based MMA method to provide near-optimal solution in polynomial time. Since the SFCs are allowed to propose to the rejected nodes again as long as the resource is available, number of rejections in MMA is higher than in MDM. Hence, MMA takes more time than MDM as we increase the number of service requests. Although our MMA solution takes slightly more time than MDM, MMA requires less number of substrate nodes to place SFCs compared to MDM. As it can be seen in Figure \ref{fig:Objective_comparison}b, the gap increases as we increase the number of service requests. In terms of percentage, our MMA algorithm requires 8$\%$ to 24$\%$ lesser physical resources than MDM for placement of reliable SFCs.   

\begin{table}[]
	\begin{center}
		%\centering
		\footnotesize
		\caption{Average running time comparison of ILP and matching algorithms (in seconds)}
		\label{tab:7}
		\begin{tabular}{|p{1cm}|c|c|c|c|c|c|}
			\hline 
			\#Service requests & 10 & 20 & 30 & 40 & 50 & 60 \\ \hline  
			ILP & 0.739  & 19.799 & 75.498 & 189.762 & 735.321 & 2143.962 \\ \hline 
			MDM & 0.03 & 0.046 & 0.091 & 0.151 & 0.242 & 0.334 \\ \hline   
			MMA & 0.024 & 0.049 & 0.109 & 0.17 & 0.28 & 0.394 \\ \hline
		\end{tabular}
	\end{center}	
\end{table}

\section{Conclusion}
In this work, we focused on reliability assured, delay-guaranteed, and resource efficient SFC placement problem. We solved this problem in two phases. In the first phase, we proposed a novel method for reliable SFC design with the objective of minimizing the additional redundant resources while meeting the SLAs, and in the second phase we formulated the reliable SFC placement problem using ILP to minimize the physical resources and proposed a matching algorithm based solution to overcome the computational complexity of ILP in large input instances. Through extensive simulations we showed that our proposed solution outperforms the state-of-the-art solutions. Compared to the existing works, our reliable SFC design technique requires very less number of additional redundant resources to assure the required reliability while meeting SLAs, and our reliable SFC placement technique is more efficient and consumes minimal physical resources for provisioning the reliable communication services. We plan to extend this work by relaxing the assumption that the links between physical nodes and virtual nodes are completely reliable. Also, efficiently placing VNFs of an SFC in various data center locations under different administrative domains with different costs is an interesting and challenging problem that we plan to address in our future work. 

\section*{Acknowledgment}
The authors would like to thank the anonymous reviewers whose comments and suggestions have greatly improved this manuscript. This research work was supported by the Department of Science and Technology (DST), New Delhi, India. 

\bibliographystyle{IEEEtran}
\bibliography{ref1}

\end{document}